\long\def\sidebyside#1#2{%
 \hbox to\textwidth{\vtop{\hsize=.6\textwidth%

 \advance\hsize by -.32\columnsep
\parindent=0pt
\centering

 #1\vskip1sp}\hskip\columnsep\vtop{\hsize=.6\textwidth%
 \advance\hsize by -.32\columnsep
\parindent=0pt
\centering
#2

}\hfill}}
\theoremstyle{definition}
\newtheorem{Thm}{Theorem}
\newtheorem{ddefinition}[Thm]{Definition}
\newtheorem{ttheorem}[Thm]{Theorem}
\begin{document}
\newcommand {\ep}{\epsilon}
\newcommand {\gmupvb}{\gamma_{\om\ub\op\vb}^{\phantom{1}}}
\newcommand {\op}{\mathbf{\oplus}} 
\newcommand {\om}{\mathbf{\ominus}} 
\newcommand {\od}{\mathbf{\otimes}}   
\newcommand {\sqp}{\boxplus}
\newcommand {\sqm}{\boxminus}
\newcommand {\gyrab}{\gyr[a,b]}
\newcommand {\gyr}{{\rm gyr}}
\newcommand {\gyrabb}{\gyr[\ab,\bb]}
\newcommand {\gyruvb}{\gyr[\ub,\vb]}
\newcommand {\gyrvub}{\gyr[\vb,\ub]}
\newcommand {\Gyr}{{\rm Gyr}}
\newcommand {\ccdot}{\mathbf{\cdot }}
\newcommand {\ab}{\mathbf{a}}
\newcommand {\bb}{\mathbf{b}}
\newcommand {\cb}{\mathbf{c}}
\newcommand {\ub}{\mathbf{u}}
\newcommand {\vb}{\mathbf{v}}
\newcommand {\wb}{\mathbf{w}}
\newcommand {\xb}{\mathbf{x}}
\newcommand {\vs}{\mathbb{V}_s}
\newcommand {\gub}{\gamma_{\ub}^{\phantom{1}}}
\newcommand {\gvb}{\gamma_{\vb}^{\phantom{1}}}
\newcommand {\gwb}{\gamma_{\wb}^{\phantom{1}}}
\newcommand {\gubs}{\gamma_{\ub}^2}
\newcommand {\gvbs}{\gamma_{\vb}^2}
\newcommand {\gwbs}{\gamma_{\wb}^2}
\newcommand {\gupvb}{\gamma_{\ub\op\vb}^{\phantom{1}}}
\newcommand {\gupvbs}{\gamma_{\ub\op\vb}^2}
\newcommand {\gmupvbs}{\gamma_{\om\ub\op\vb}^2}
\newcommand {\Rb}{\mathbb{R}}
\newcommand {\Rn}{\Rb^n}
\newcommand {\Rt}{\Rb^3}
\newcommand {\Rcn}{{\Rb}_{c}^{n}}
\newcommand {\Rct}{{\Rb}_{c}^{3}}
\newcommand {\Rsn}{{\Rb}_{s}^{n}}
\newcommand {\Rst}{{\Rb}_{s}^{3}}
\newcommand {\Rtwo}{\Rb^2}
\newcommand {\Rctwo}{{\Rb}_{s}^2}
\newcommand {\Rstwo}{{\Rb}_{s}^2}
\newcommand {\uvc}{\displaystyle\frac{\lower.6ex \hbox {$\ub\ccdot\vb$}}{c^2}}
\newcommand {\uvs}{\displaystyle\frac{\lower.6ex \hbox {$\ub\ccdot\vb$}}{s^2}}
\newcommand {\unpuvc}{ \lower.6ex \hbox {$1 + \uvc$} }
\newcommand {\zerb}{\mathbf{0}}
\newcommand {\bz}{\mathbf{0}}
\newcommand {\Aut}{{\rm Aut}}
\newcommand {\ro}{r_{_1}}
\newcommand {\rt}{r_{_2}}
\newcommand {\trace}{{\rm trace}}
\newcommand {\ggub}{\sqrt{\gamma_{\ub}^2-1}}
\newcommand {\ggvb}{\sqrt{\gamma_{\vb}^2-1}}
\newcommand {\ggupvb}{\sqrt{\gamma_{\ub\op\vb}^2-1}}
\newcommand {\omegab}{\boldsymbol{\omega}}
\newcommand {\cmt}{c^{-2}}
\newcommand {\tonxb}{\begin{pmatrix} t \\ \xb \end{pmatrix}}
\newcommand {\tonxbprime}{\begin{pmatrix}t^\prime\\ \xb^\prime\end{pmatrix}}
\newcommand {\half}{\textstyle\frac{1}{2}}

\vskip.8cm
\addtocounter{page}{-8}
\pagenumbering{arabic}
\begin{center}
\huge{
Gyrations: The Missing Link Between
Classical Mechanics with its Underlying Euclidean Geometry
and
Relativistic Mechanics with its Underlying Hyperbolic Geometry
     }
\end{center}
\begin{center}
Abraham A. Ungar\\
Department of Mathematics\\
North Dakota State University\\
Fargo, ND 58105, USA\\
Email: abraham.ungar@ndsu.edu\\
\vspace{0.4cm}
Dedicated to the 80th Anniversary of Steve Smale\\[8pt]
\end{center}

\begin{quotation}
{\bf ABSTRACT}\phantom{OO}
Being neither commutative nor associative, Einstein velocity addition
of relativistically admissible velocities
gives rise to gyrations. Gyrations, in turn, measure the
extent to which Einstein addition deviates from commutativity and from associativity.
Gyrations are geometric automorphisms abstracted from the relativistic mechanical effect
known as Thomas precession.
Thomas precession is a peculiar special relativistic space rotation that stems from
Einstein velocity addition so that, in turn, Einstein velocity addition is
algebraically and geometrically regulated by Thomas precession.
Gyrations give rise to {\it gyroalgebra} and {\it gyrogeometry}, elegant nonassociative
algebra and geometry in which groups and vector spaces become gyrogroups and
gyrovector spaces and Euclidean geometry becomes hyperbolic geometry. Indeed,
gyrovector spaces form the algebraic setting for
the hyperbolic geometry of Bolyai and Lobachevsky just as
vector spaces form the algebraic setting for Euclidean geometry.
As such, gyrations provide the missing link between Einstein's special theory of relativity
and the hyperbolic geometry of Bolyai and Lobachevsky.

In an exhaustive review of the vast literature on Thomas precession,
G.B.~Malykin points out that many studies of Thomas precession
are erroneous for various reasons.
In particular, Malykin's review reveals that there is no agreement among
Thomas precession explorers on whether or not the Thomas precession angle
of rotation and its generating angle have opposite signs.
The mission of this article is to employ the study of gyrations and their geometry to
the understanding of Thomas precession and, hence,
to put Thomas precession misconceptions to rest.
This article is dedicated to the 80th Anniversary of Steve Smale,
whose work in geometric mechanics is well known.
\end{quotation}

\section{Introduction}
\label{secc1}

It has been the lifetime desire of Steve Smale to improve our understanding
of geometric mechanics \cite{smale2000}, a theory that fits extraordinarily well into
dynamical systems framework, as he explains in his 1967 survey article \cite{smale1967}.
An overview of his involvement with geometric mechanics, without entering
into technical details, is presented by Marsden in \cite{marsden93}.
The impact of Einstein's work \cite{rassiaseins82} led smale to remark
in \cite[p.~365]{smale2000} that relativity theory respects classical mechanics since
``Einstein worked from a very deep understanding of Newtonian theory''.
The present article on the hyperbolic geometric interpretation of
the relativistic mechanical effect known as Thomas precession
is therefore dedicated to the 80th Anniversary of Steve Smale for
his leadership and commitment to excellence in the field of geometric mechanics.

Thomas precession of Einstein's special theory of relativity is a physical
realization of the abstract gyration. The latter, in turn, is
an automorphism (defined in Def.~\ref{defroupx})
that provides the missing link between
Einstein's special theory of relativity
and the hyperbolic geometry of Bolyai and Lobachevsky.
Named after Llewellyn Hilleth Thomas (1902--1992) who discovered its physical significance
in 1926 \cite{belloni86,thomas26,thomas27}, Thomas precession
is a special relativistic kinematic effect that regulates Einstein velocity addition
both algebraically and geometrically \cite{mybook01}.
In an exhaustive review of the vast literature on Thomas precession
\cite{malykin06}, G.B.~Malykin emphasizes the importance of the frequency of the precession,
pinpointing related erroneous results that are common in the literature.

Accordingly, a study of Thomas precession in terms of its underlying
hyperbolic geometry and elegant algebra is presented here in order to clarify
the concept of Thomas precession.
Thomas precession is an important special relativistic rotation
that results from the nonassociativity of Einstein velocity addition and, hence,
does not exist classically.
Indeed, it was discovered in 1988 \cite{parametrization} that
Thomas precession regulates Einstein velocity addition, endowing it with
a rich algebraic structure. As such,
Thomas precession admits extension by abstraction, in which
{\it precession} becomes {\it gyration}. The latter, in turn, gives rise to
two new algebraic structures called a gyrogroup and a gyrovector space,
thus introducing new realms to explore.
The basic importance of gyrations is emphasized in gyrolanguage, where we
prefix a gyro to any term that describes a
concept in Euclidean geometry and in associative algebra
to mean the analogous concept in hyperbolic geometry and nonassociative algebra.

Gyrovector spaces turn out to form the algebraic setting for the
hyperbolic geometry of Bolyai and Lobachevsky
just as vector spaces form the algebraic setting for Euclidean geometry
\cite{service,mybook01,walterrev2002}.
This discovery resulted in the extension by abstraction
of Thomas precession into gyration, and in subsequent studies presented in
several books
\cite{mybook01,mybook02,mybook03,mybook04,mybook06,mybook05}
reviewed, for instance, in
\cite{walterrev2002} and \cite{rassiasrev2010}.
The hyperbolic geometric character of Thomas precession is emphasized here
by the observation that it can be interpreted as the defect of a related
hyperbolic triangle.

Following Malykin's observations in \cite{malykin06}, there is a need to demonstrate that
our study of Thomas precession does lead to the correct Thomas precession angular velocity.
Accordingly, in this article we review the studies in
\cite{mybook01,mybook02,mybook03,mybook04,mybook06,mybook05}
of both Thomas precession and its abstract version, gyration.
Based on the review we derive the correct Thomas precession angular velocity,
illustrated in Fig.~\ref{fig115}.

A boost is a Lorentz transformation without rotation \cite{quasi91}.
The Thomas precession angle $\ep$ is generated by the application of
two successive boosts with velocity parameters, say, $\ub$ and $\vb$.
The angle $\theta$ between $\ub$ and $\vb$ is the generating angle of
the resulting Thomas precession angle $\ep$, shown in Fig.~\ref{fig115}.

An important question about the Thomas precession angle $\ep$ and its
generating angle $\theta$ is whether or not $\ep$ and $\theta$ have
equal signs. According to Malykin \cite{malykin06}, some explorers
claim that $\ep$ and $\theta$ have equal signs while some other explorers
claim that $\ep$ and $\theta$ have opposite signs.
Malykin claims that these angles have equal signs while, in contrast, we demonstrate
here convincingly that these angles have opposite signs.
Our demonstration is convincing since it
accompanies a focal identity, \eqref{guliv11}, that
interested explorers can test numerically in order to corroborate our claim
that $\ep$ and $\theta$ have opposite signs.

In order to pave the way to study Thomas precession we present the path from
Einstein velocity addition to the gyroalgebra of gyrogroups and gyrations,
and to the gyrogeometry that coincides with
the hyperbolic geometry of Bolyai and Lobachevsky. We, then,
demonstrate that the concept of Thomas precession in Einstein's
special theory of relativity is a concrete realization of the abstract concept
of gyration in gyroalgebra.

A signed angle $\theta$, $-\pi<\theta<\pi$,
between two non-parallel vectors $\ub$ and $\vb$ in the
Euclidean 3-space $\Rt$ is positive (negative)
if the angle $\theta$ drawn from $\ub$ to $\vb$ is drawn
counterclockwise (clockwise). The relationship between
the Thomas precession signed angle of rotation, $\ep$,
and its generating signed angle, $\theta$, shown in Fig.~\ref{fig115},
is important. Hence, finally, we pay
special attention to the relationship between the
Thomas precession signed angle of rotation and its generating signed angle,
demonstrating that these have opposite signs.

\section{Einstein Velocity Addition and Scalar Multiplication} \label{secc2}

Let $(\Rt,+,\ccdot)$ be the Euclidean $3$-space
with its common vector addition, +, and inner product, $\ccdot$,
and let
\begin{equation} \label{eqcball}
\Rct    = \{\vb\in\Rt: \|\vb\| < c \}
\end{equation}
be the $c$-ball of all relativistically admissible velocities of material
particles, where $c$ is the vacuum speed of light.

Einstein velocity addition is a binary operation, $\op$,
in the $c$-ball $\Rct$ of all relativistically admissible velocities,
given by the equation,
\cite{mybook01},
\cite[Eq.~2.9.2]{urbantkebookeng},\cite[p.~55]{moller52},\cite{fock},
\begin{equation} \label{eq01}
\begin{split}
{\ub}\op{\vb} &=\frac{1}{\unpuvc}
\left\{ {\ub}+ \frac{1}{\gub}\vb+\frac{1}{c^{2}}\frac{\gamma _{{\ub}}}{%
1+\gamma _{{\ub}}}( {\ub}\ccdot{\vb}) {\ub} \right\}
\\[8pt]  &=
\frac{1}{\unpuvc}
\left\{ {\ub}+ {\vb}+\frac{1}{c^{2}}\frac{\gamma _{{\ub}}}{%
1+\gamma _{{\ub}}}\left( {\ub}\times ( {\ub}\times {\vb}%
)\right) \right\}
\end{split}
\end{equation}
for all $\ub,\vb\in\Rct$,
where $\gub$ is the gamma factor given by the equation
\begin{equation} \label{v72gs}
\gvb = \frac{1}{\sqrt{1-\displaystyle\frac{\|\vb\|^2}{c^2}}}
\end{equation}
Here $\ub\ccdot\vb$ and $\|\vb\|$
are the inner product and the norm
in the ball, which the ball $\Rct  $ inherits from its space $\Rt$,
$\|\vb\|^2=\vb\ccdot\vb=\vb^2$.
Recalling that
a nonempty set with a binary operation is called a {\it groupoid},
the Einstein groupoid $(\Rct,\op)$ is called an {\it Einstein gyrogroup}.
A formal definition of the abstract gyrogroup will be presented in
Def.~\ref{defroupx}.

Einstein addition admits scalar multiplication $\od$, giving rise to the
Einstein gyrovector space $(\Rct,\op,\od)$. Remarkably, the resulting
Einstein gyrovector spaces $(\Rcn,\op,\od)$ form the setting for the
Cartesian-Beltrami-Klein ball model of hyperbolic geometry, just as
vector spaces form the setting for the standard Cartesian model of
Euclidean geometry, as we will see in the sequel.

Let $k\od\vb$ be the Einstein addition of $k$ copies of $\vb\in \Rcn$,
that is
$k\od\vb=\vb\op\vb\ldots\op\vb$ ($k$ terms). Then,
\begin{equation} \label{duhnd1}
k\od \vb = c \frac
 {\left(1+\displaystyle\frac{\| \vb \|}{c}\right)^k
- \left(1-\displaystyle\frac{\| \vb \|}{c}\right)^k}
 {\left(1+\displaystyle\frac{\| \vb \|}{c}\right)^k
+ \left(1-\displaystyle\frac{\| \vb \|}{c}\right)^k}
\frac{\vb}{\| \vb\|}
\end{equation}

The definition of scalar multiplication in an Einstein gyrovector space
requires analytically continuing $k$ off the positive integers, thus
obtaining the following definition:
\index{scalar multiplication, Einstein}
\index{Gyrovector space}

\begin{ddefinition}\label{defgvspace}
\index{Gyrovector spaces, def.}
\index{Einstein Scalar Multiplication, def.}
{\bf (Einstein Scalar Multiplication; Einstein Gyrovector Spaces).}
{\it
An Einstein gyrovector space $(\Rsn,\op,\od)$
is an Einstein gyrogroup $(\Rsn,\op)$ with scalar multiplication
$\od$ given by
 \begin{equation} \label{eqmlt03}
 r\od\vb = s \frac
 {\left(1+\displaystyle\frac{\|\vb\|}{s}\right)^r
- \left(1-\displaystyle\frac{\|\vb\|}{s}\right)^r}
 {\left(1+\displaystyle\frac{\|\vb\|}{s}\right)^r
+ \left(1-\displaystyle\frac{\|\vb\|}{s}\right)^r}
\frac{\vb}{\|\vb\|}
= s \tanh( r\,\tanh^{-1}\frac{\|\vb\|}{s})\frac{\vb}{\|\vb\|}
 \end{equation}
where $r$ is any real number, $r\in\Rb$,
$\vb\in\Rsn$, $\vb\ne\bz$, and $r\od \bz=\bz$, and with
which we use the notation $\vb\od  r=r\od \vb$.
}
\end{ddefinition}

In the Newtonian limit of large $c$, $c\rightarrow\infty$, the ball $\Rct   $
expands to the whole of its space $\Rt$, as we see from \eqref{eqcball},
and Einstein addition $\op$ in $\Rct   $
reduces to the ordinary vector addition $+$ in $\Rt$,
as we see from \eqref{eq01} and \eqref{v72gs}.

Einstein addition \eqref{eq01} of relativistically
admissible velocities was introduced by Einstein in his 1905 paper
\cite{einstein05} \cite[p.~141]{einsteinfive}
that founded the special theory of relativity.
One has to remember here that the Euclidean 3-vector algebra was not so
widely known in 1905 and, consequently, was not used by Einstein.
Einstein calculated in \cite{einstein05} the behavior
of the velocity components parallel and orthogonal to the relative
velocity between inertial systems, which is as close as one can get
without vectors to the vectorial version \eqref{eq01} of Einstein addition.

We naturally use the abbreviation
$\ub\om\vb=\ub\op(-\vb)$ for Einstein subtraction, so that,
for instance, $\vb\om\vb = \zerb$,
$\om\vb = \zerb\om\vb=-\vb$ and, in particular,
\begin{equation} \label{eq01a}
\om(\ub\op\vb) = \om\ub\om\vb
\end{equation}
and
\begin{equation} \label{eq01b}
\om\ub\op(\ub\op\vb) = \vb
\end{equation}
for all $\ub,\vb$ in the ball $\Rct$,
in full analogy with vector addition and subtraction in $\Rt$. Identity
\eqref{eq01a} is known as the {\it automorphic inverse property},
and Identity
\eqref{eq01b} is known as the {\it left cancellation law} of Einstein
addition \cite{mybook03}.
We may note that
Einstein addition does not obey the naive right counterpart of the
left cancellation law \eqref{eq01b} since, in general,
\begin{equation} \label{eq01c}
(\ub\op\vb)\om\vb \ne \ub
\end{equation}
The seemingly lack of a right cancellation law for
Einstein addition is repaired in \eqref{rightcanc}
by the introduction of a second binary operation,
called {\it Einstein coaddition}, \eqref{hvudk},
which captures important analogies.

Einstein addition and the gamma factor are related by
the {\it gamma identity},
\begin{subequations} \label{grbsf09}
\begin{equation} \label{grbsf09p0}
\gamma_{\ub\op\vb}^{\phantom{O}}
= \gub\gvb\left( 1 + \frac{\ub\ccdot\vb}{c^2}\right)
\end{equation}
which can be written, equivalently, as
\begin{equation} \label{grbsf09p1}
\gamma_{\om\ub\op\vb}^{\phantom{O}}
= \gub\gvb\left( 1 - \frac{\ub\ccdot\vb}{c^2}\right)
\end{equation}
\end{subequations}
for all $\ub,\vb\in \Rct$.
Here, \eqref{grbsf09p1} is obtained from \eqref{grbsf09p0} by replacing $\ub$ by
$\om \ub=-\ub$.

A frequently used identity that follows immediately from \eqref{v72gs} is
\begin{equation} \label{rugh1ds}
\frac{\vb^2}{c^2} =
\frac{\|\vb\|^2}{c^2} = \frac{\gamma_\vb^2 - 1}{\gamma_\vb^2}
\end{equation}
and, similarly, useful identities that follow immediately from \eqref{grbsf09} are
\begin{subequations} \label{rugh2ds}
\begin{equation} \label{rugh2dsp1}
\frac{\ub\ccdot\vb}{c^2} = -1 + \frac{\gamma_{   \ub\op\vb}^{\phantom{O}}}{\gub\gvb}
\end{equation}
and
\begin{equation} \label{rugh2dsp2}
\frac{\ub\ccdot\vb}{c^2} = 1 - \frac{\gamma_{\om\ub\op\vb}^{\phantom{O}}}{\gub\gvb}
\end{equation}
implying
\begin{equation} \label{rugh3}
\gamma_{\ub\op\vb}^{\phantom{O}} - \gub\gvb
=
-\gamma_{\om\ub\op\vb}^{\phantom{O}} + \gub\gvb
\end{equation}
\end{subequations}

In Identity \eqref{rugh3} the left-hand side seems to be more elegant, in form,
than the right-hand side.
Geometrically, however, the right-hand side of this identity
is advantageous over its left-hand side because the gamma factor
$\gamma_{\om\ub\op\vb}^{\phantom{O}}$ that appears on the
right-hand side possesses a geometric interpretation.
It is a geometric interpretation in hyperbolic triangles, called {\it gyrotriangles},
as explained in \cite[Sec.~2.10]{mybook05}.
To be more specific, we recall in Sec.~\ref{specific} relevant results
from \cite{mybook05}.

Einstein addition is noncommutative. Indeed, in general,
\begin{equation} \label{eqyt01} 
\ub\op\vb\ne\vb\op\ub
\end{equation}
$\ub,\vb\in\Rct$. Moreover, Einstein addition is also nonassociative since,
in general,
\begin{equation} \label{eqyt02}
(\ub\op\vb)\op\wb\ne\ub\op(\vb\op\wb)
\end{equation}
$\ub,\vb,\wb\in\Rct   $.

It seems that following the breakdown of
commutativity and associativity in Einstein addition some mathematical
regularity has been lost in the transition from
Newton's velocity vector addition in $\Rt$ to
Einstein's velocity addition \eqref{eq01} in $\Rct   $. This is, however, not the
case since gyrations come to the rescue, as we will see in Sec.~\ref{secc3}.
Owing to the presence of gyrations, the Einstein groupoid $(\Rct,\op)$
has a grouplike structure \cite{grouplike} that we naturally call an
{\it Einstein gyrogroup} \cite{mybook01}.
The formal definition of the resulting abstract gyrogroup will be presented in
Def.~\ref{defroupx}, Sec.~\ref{secc4}.

\section{Linking Einstein Addition to Hyperbolic Geometry}\label{secfm}

The Einstein gyrodistance function, $d(\ub,\vb)$
in an Einstein gyrovector space $(\Rcn,\op,\od)$ is given by
the equation
\begin{equation} \label{eqeindist}
d(\ub,\vb)=\|\ub\om\vb\|
\end{equation}
$\ub,\vb\in \Rcn$.
We call it a {\it gyrodistance function}
in order to emphasize the analogies it shares
with its Euclidean counterpart, the distance function
$\|\ub-\vb\|$ in $\Rn$.
Among these analogies is the gyrotriangle inequality according to which
\begin{equation} \label{rib01}
\|\ub\op\vb\| \le \|\ub\| \op \|\vb\|
\end{equation}
for all $\ub,\vb \in\Rcn$.
For this and other analogies that distance and gyrodistance functions share
see \cite{mybook02, mybook03}.

In a two dimensional Einstein gyrovector space $(\Rctwo,\op,\od)$
the squared gyrodistance between a
point $\xb\in \Rctwo$ and an infinitesimally
nearby point $\xb+d\xb\in \Rctwo$, $d\xb=(dx_1,~dx_2)$,
is defined by the equation \cite[Sec.~7.5]{mybook03} \cite[Sec.~7.5]{mybook02}
\begin{equation} \label{eqwdfs00}
\begin{split}
ds^2 &= \|(\xb+d\xb)\om\xb\|^2 \\[6pt]
&= Edx_1^2 + 2Fdx_1dx_2 + Gdx_2^2 + \dots
\end{split}
\end{equation}
where, if we use the notation $r^2 = x_1^2+x_2^2$, we have
 \begin{equation} \label{eqwdfs02}
 \begin{split}
E &= c^2\frac{c^2-x_2^2}{(c^2-r^2)^2}  \\[8pt]
F &= c^2\frac{x_1 x_2}{(c^2-r^2)^2}  \\[8pt]
G &= c^2\frac{c^2-x_1^2}{(c^2-r^2)^2}
 \end{split}
 \end{equation}

The triple $(g_{11},g_{12},g_{22})=(E,F,G)$
along with $g_{21}=g_{12}$ is known in
differential geometry as the metric tensor $g_{ij}$ \cite{kreyszig91}.
It turns out to be the metric tensor of the Beltrami-Klein disc model
of hyperbolic geometry \cite[p.~220]{mccleary94}.
Hence, $ds^2$
in \eqref{eqwdfs00}\,--\,\eqref{eqwdfs02}
is the Riemannian line element of the Beltrami-Klein disc model
of hyperbolic geometry, linked to
Einstein velocity addition \eqref{eq01}
and to Einstein gyrodistance function \eqref{eqeindist}
\cite{ungardiff05}.

The link between Einstein gyrovector spaces and the Beltrami-Klein ball model
of hyperbolic geometry, already noted by Fock \cite[p.~39]{fock},
has thus been established in \eqref{eqeindist}\,--\,\eqref{eqwdfs02} in two dimensions.
The extension of the link to higher dimensions is presented in
\cite[Sec.~9, Chap.~3]{mybook01}, \cite[Sec.~7.5]{mybook03} \cite[Sec.~7.5]{mybook02} and
\cite{ungardiff05}.
For a brief account of the history of linking Einstein's velocity
addition law with hyperbolic geometry see
\cite[p.~943]{rhodes04}.

\section{Gyrotriangle, the Hyperbolic Triangle} \label{specific}

In this inspirational section we present recollections from \cite{mybook05}
that intend to motivate Thomas precession explorers
to study the gyrostructure of Einstein addition and its underlying
hyperbolic geometry.

To be specific about the advantage in geometric interpretation of
the use of $\gamma_{\om\ub\op\vb}^{\phantom{O}}$ over the use of
$\gamma_{\ub\op\vb}^{\phantom{O}}$ that appear in Identity \eqref{rugh3}
we recall from \cite{mybook05} that Einstein addition
admits scalar multiplication, $\od$,
turning the Einstein {\it gyrogroup} $(\Rct,\op)$ into a corresponding
Einstein {\it gyrovector space} $(\Rct,\op,\od)$, as we will see in
Sec.~\ref{secfm}.

  
\begin{figure}[t]  
 \centering         
\psfrag{O}[]{$\phantom{O}$}
\psfrag{A1}[]{$U$}
\psfrag{A2}[]{$V$}
\psfrag{A3}[]{$W$}
\psfrag{pa1}{$\gub$}
\psfrag{pa2}[]{$\gvb$}
\psfrag{pa3}{$\gwb=\gmupvb$}
\psfrag{text1}[]{$\ub=\om W \op V$}
\psfrag{text2}[]{$\vb=\om W \op U$}
\psfrag{text3}[]{$\wb=\om U \op V$}
\psfrag{al1}[]{$\alpha$}
\psfrag{al2}[]{$\beta$}
\psfrag{al3}[]{$\gamma$}
\psfrag{formula01}{$w=\|\wb\|=\|\om U \op V\|$}
\psfrag{formula02}{$v=\|\vb\|=\|\om W \op U\|$}
\psfrag{formula03}{$u=\|\ub\|=\|\om W \op V\|$}
\psfrag{fig171ec3}{$\delta=\pi-(\alpha+\beta+\gamma)$}
\psfrag{----chets1}[]{\lower-1.2ex \hbox {$\blacktriangleright$}}
\psfrag{----chets2}[]{\lower-1.26ex \hbox {$\blacktriangleright$}}
\psfrag{----chets3}[]{\lower-1.2ex \hbox {$\blacktriangleright$}}
 \includegraphics[width=10cm]{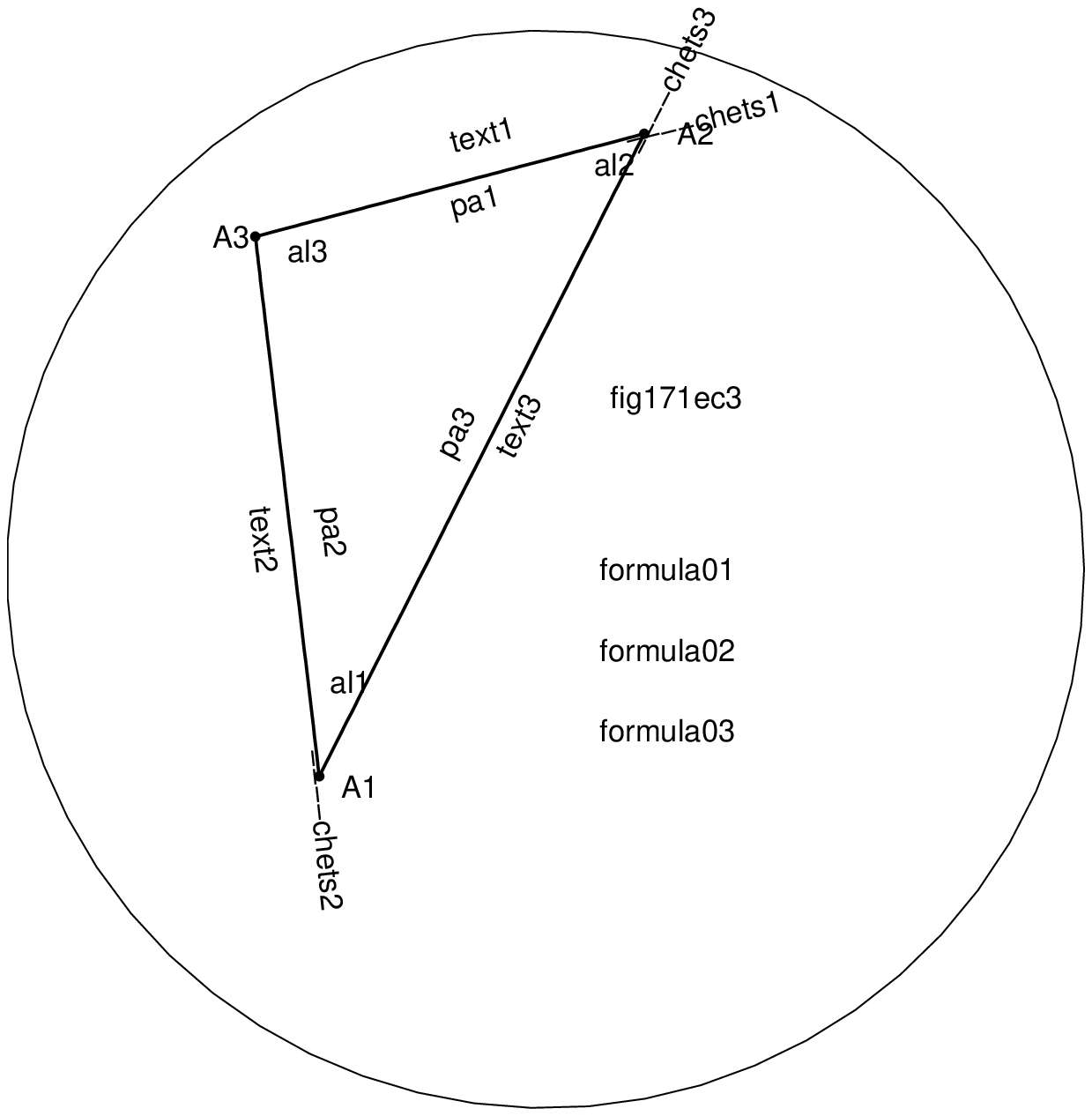}
\caption{
The gyrotriangle $UVW$ in an
Einstein gyrovector space $(\Rsn,\op,\od)$ is shown for $n=2$.
Its sides are presented graphically as gyrosegments that join the vertices.
They form the gyrovectors $\ub,\vb,\wb$,
side-gyrolengths, $u,v,w$,
and gyroangles, $\alpha,\beta,\gamma$.
The gyrotriangle gyroangle sum is less than $\pi$,
the difference, $\delta=\pi-(\alpha + \beta + \gamma)$,
being the gyrotriangular defect.
\label{fig244ein0bm}}
\end{figure}

Let $U$, $V$ and $W$ be the three vertices of a gyrotriangle $UVW$
in an Einstein gyrovector space $(\Rct,\op,\od)$,
shown in Fig.~\ref{fig244ein0bm}.
Then, in full analogy with Euclidean geometry,
the three sides of the gyrotriangle form the three {\it gyrovectors}
\begin{equation} \label{hdke1}
\begin{split}
\ub &= \om W\op V \\
\vb &= \om W\op U \\
\wb &= \om U\op V
\end{split}
\end{equation}
and the corresponding three side-gyrolengths of the gyrotriangle are
\begin{equation} \label{hdke2}
\begin{split}
u &= \|\ub\| = \|\om W\op V\| \\
v &= \|\vb\| = \|\om W\op U\| \\
w &= \|\wb\| = \|\om U\op V\| = \|\om\ub\op\vb\|
\end{split}
\end{equation}
and the side gamma factors of the gyrotriangle are, accordingly,
$\gub$, $\gvb$ and
\begin{equation} \label{hdke3}
\gwb = \gmupvb
\end{equation}
Hence, by \eqref{hdke3}, the gamma factors $\gub$, $\gvb$ and $\gmupvb$
in \eqref{rugh3} can be interpreted geometrically as the gamma factors
of the three sides of a gyrotriangle $UVW$.

For later reference we recall
here that the gyrotriangular defect $\delta$ of the gyrotriangle $UVW$
is given in terms of the gyrotriangle side gamma factors
by the equation
\cite[Theorem 2.32]{mybook06}
\cite[Theorem 6.11]{mybook05}
\begin{equation} \label{hdke4}
\tan\tfrac{\delta}{2} =
\frac{\sqrt{1 + 2\gub\gvb\gwb-\gubs-\gvbs-\gwbs}}
{1+\gub+\gvb+\gwb}
\end{equation}
where $\gwb=\gmupvb$.

It is the gamma identity \eqref{grbsf09p0} that
signaled the emergence of hyperbolic geometry in special relativity
when it was first studied by
Sommerfeld \cite{sommerfeld09} and Vari\v cak \cite{varicak08,varicak10}.
Historically,
it formed the first link between special relativity and the
hyperbolic geometry of Bolyai and Lobachevsky, recently leading to the novel
trigonometry in hyperbolic geometry that became known as {\it gyrotrigonometry},
studied in
\cite{mybook01,mybook02,mybook03,mybook04,mybook06,mybook05}.

\section{The Gyrostructure of Einstein Addition}
\label{secc3}

Vector addition, +, in $\Rt$ is both commutative and associative, satisfying
\begin{alignat}{2}\label{laws11}
\notag
\ub + \vb &= \vb + \ub &&\hspace{1.2cm}\text{Commutative Law}\\
\notag
\ub + (\vb + \wb) &= (\ub + \vb) + \wb &&\hspace{1.2cm}\text{Associative Law}\\
\end{alignat}
for all $\ub,\vb,\wb\in \Rt$.
In contrast, Einstein addition, $\op$, in $\Rct$ is neither commutative nor associative.

In order to measure the extent to which Einstein addition deviates from associativity
we introduce {\it gyrations},
which are maps that are {\it trivial} in the special
cases when the application of $\op$ is associative.
For any $\ub,\vb\in\Rct$ the gyration $\gyruvb$ is an automorphism of the
Einstein groupoid $(\Rct,\op)$ onto itself, given in terms of
Einstein addition by the equation
\begin{equation} \label{jdhbw}
\gyruvb\wb = \om(\ub\op\vb)\op\{\ub\op(\vb\op\wb)\}
\end{equation}
for all $\ub,\vb,\wb\in\Rct$.

We recall that an automorphism of a groupoid $(S,\op)$ is a one-to-one map
$f$ of $S$ onto itself that respects the binary operation, that is,
$f(a\op b)=f(a)\op f(b)$ for all $a,b\in S$.
The set of all automorphisms of a groupoid $(S,\op)$ forms a group,
denoted $\Aut(S,\op)$, where the group operation is given by
automorphism composition.
To emphasize that the gyrations of an Einstein gyrogroup $(\Rct,\op)$
are automorphisms of the gyrogroup, gyrations are also called
{\it gyroautomorphisms}.

A gyration $\gyruvb$, $\ub,\vb\in\Rct$, is {\it trivial} if $\gyruvb\wb=\wb$
for all $\wb\in\Rct$. Thus, for instance, the gyrations $\gyr[\zerb,\vb]$, $\gyr[\vb,\vb]$
and $\gyr[\vb,\om\vb]$ are trivial for all $\vb\in\Rct$, as we see from
\eqref{jdhbw} and \eqref{eq01b}.

Einstein gyrations, which possess their own rich structure,
measure the extent to which Einstein addition deviates from both
commutativity and associativity as we see from the gyrocommutative and the
gyroassociative laws of Einstein addition in the following list of identities
\cite{mybook01,mybook02,mybook03,mybook04,mybook06,mybook05},
each of which has a name:
\begin{alignat}{2}\label{laws00}
\notag
\ub\op\vb &= \gyruvb(\vb\op\ub) &&\hspace{1.2cm}\text{Gyrocommutative Law}\\
\notag
\ub\op(\vb\op \wb) &= (\ub\op \vb)\op\gyruvb \wb &&\hspace{1.2cm}\text{Left Gyroassociative Law}\\
\notag
(\ub\op \vb)\op \wb &=\ub\op(\vb\op\gyrvub \wb) &&\hspace{1.2cm}\text{Right Gyroassociative Law}\\
\notag
\gyr [\ub\op \vb,\vb] &= \gyruvb &&\hspace{1.2cm}\text{Gyration Left Loop Property}\\
\notag
\gyr [\ub,\vb\op \ub] &= \gyruvb &&\hspace{1.2cm}\text{Gyration Right Loop Property}\\
\notag
\gyr[\om\ub,\om\vb] &=\gyr [\ub,\vb] &&\hspace{1.2cm}\text{Gyration Even Property} \\
\notag
(\gyr [\ub,\vb])^{-1} &=\gyr [\vb,\ub] &&\hspace{1.2cm}\text{Gyration Inversion Law} \\
\end{alignat}
for all $\ub,\vb,\wb\in \Rct$.

Einstein addition is thus regulated by gyrations to which it gives rise owing
to its nonassociativity, so that
Einstein addition and its gyrations are inextricably linked.
The resulting gyrocommutative gyrogroup structure of Einstein addition
was discovered in 1988 \cite{parametrization}.
Interestingly, gyrations are the mathematical abstraction of the
relativistic mechanical effect known as {\it Thomas precession} \cite[Sec.~10.3]{mybook03},
as we will see in Sec.~\ref{secgm}.

\section{Gyrations}\label{secgyrations73}

Owing to its nonassociativity,
Einstein addition gives rise in \eqref{jdhbw} to nontrivial gyrations
\begin{equation} \label{hkedu}
\gyruvb : \Rct~\rightarrow~\Rct
\end{equation}
for any $\ub,\vb\in\Rct$.
Gyrations, in turn, regulate Einstein addition, endowing it with
the rich structure of a gyrocommutative gyrogroup that will be
formalize in Sec.~\ref{secc4}.

The gyration equation is expressed in \eqref{jdhbw}
in terms of Einstein addition. Expressing it explicitly,
in terms of vector addition and vector scalar product rather than Einstein addition,
we obtain the equation
\begin{equation} \label{hdge1ein}
\gyruvb\wb = \wb + \frac{A\ub+B\vb}{D}
\end{equation}
where
\begin{equation} \label{hdgej2ein}
\begin{split}
 A &=-\frac{1}{c^2}\frac{\gubs}{(\gub+1)} (\gvb-1) (\ub\ccdot\wb)
 +
 \frac{1}{c^2}\gub\gvb (\vb\ccdot\wb)
\\[8pt] & \phantom{=} ~+
 \frac{2}{c^4} \frac{\gubs\gvbs}{(\gub+1)(\gvb+1)} (\ub\ccdot\vb) (\vb\ccdot\wb)
\\[8pt]
B &=- \frac{1}{c^2}
\frac{\gvb}{\gvb+1}
\{\gub(\gvb+1)(\ub\ccdot\wb) + (\gub-1)\gvb(\vb\ccdot\wb) \}
\\[8pt]
D &= \gub\gvb(1+ \frac{\ub\ccdot\vb}{c^2}) +1 = \gamma_{\ub\op\vb}^{\phantom{O}} + 1 > 1
\end{split}
\end{equation}
for all $\ub,\vb,\wb\in\Rct$.
Clearly, the domain of $\ub$ and $\vb$ in \eqref{hdgej2ein} must be restricted
to the open ball $\Rct$ to insure the reality of the
Lorentz factors $\gub$ and $\gvb$.
In contrast, however, the domain of $\wb$ need not be restricted to the ball.

Allowing $\wb\in\Rt\supset\Rct$ in \eqref{hdge1ein}\,--\,\eqref{hdgej2ein},
that is, extending the domain of $\wb$ from $\Rct$ to $\Rt$,
gyrations $\gyr[\ub,\vb]$
are expendable from self-maps of $\Rct$ to linear self-maps of $\Rt$ for all $\ub,\vb\in\Rct$.
Indeed,
\begin{equation} \label{hekdn}
\gyruvb(\ro\wb_1+\rt\wb_2) = \ro\gyruvb\wb_1+\rt\gyruvb\wb_2
\end{equation}
for all $\ub,\vb\in\Rct$, $\wb\in\Rt$ and $\ro,\rt\in\Rb$.

In each of the three special cases when
(i) $\ub=\zerb$, or
(ii) $\vb=\zerb$, or
(iii) $\ub$ and $\vb$ are parallel
in $\Rt$, $\ub\|\vb$, we have
$A\ub+B\vb=\zerb$, so that in these cases $\gyr[\ub,\vb]$ is trivial. Thus, we have
\begin{equation} \label{sprdhein}
\begin{split}
\gyr[\zerb,\vb]\wb &= \wb  \\
\gyr[\ub,\zerb]\wb &= \wb \\
\gyr[\ub,\vb]\wb &= \wb , \hspace{1.2cm} \ub\|\vb
\end{split}
\end{equation}
for all $\ub,\vb\in\Rct\subset\Rt$ and all $\wb\in\Rt$.

It follows from \eqref{hdge1ein}\,--\,\eqref{hdgej2ein} that
\begin{equation} \label{eq1ffmznein}
\gyr[\vb,\ub](\gyr[\ub,\vb]\wb) = \wb
\end{equation}
for all $\ub,\vb\in\Rct$, $\wb\in\Rt$, so that gyrations are invertible
linear maps of $\Rt$, the inverse, $\gyr^{-1}[\ub,\vb]$, \eqref{laws00},
of $\gyr[\ub,\vb]$ being $\gyr[\vb,\ub]$.
We thus obtain from \eqref{eq1ffmznein} the gyration inversion property in \eqref{laws00},
\begin{equation} \label{feknd}
\gyr^{-1}[\ub,\vb] = \gyr[\vb,\ub]
\end{equation}
for all $\ub,\vb\in\Rct$.

Gyrations keep the inner product of
elements of the ball $\Rct$ invariant, that is,
\begin{equation} \label{eq005}
\gyruvb\ab\ccdot\gyruvb\bb = \ab\ccdot\bb
\end{equation}
for all $\ab,\bb,\ub,\vb\in \Rct$. Hence, in particular, $\gyruvb$ is an
{\it isometry} of $\Rct$,
keeping the norm of elements of the ball $\Rct$ invariant,
\begin{equation} \label{eq005a}
\|\gyruvb \wb\| = \|\wb\|
\end{equation}
Accordingly, for any $\ub,\vb\in \Rct$,
$\gyruvb$ represents a rotation of the ball $\Rct$ about its origin.

The invertible self-map $\gyruvb$ of $\Rct$ respects Einstein addition in $\Rct$,
\begin{equation} \label{eq005b}
\gyruvb (\ab \op \bb) = \gyruvb\ab \op \gyruvb\bb
\end{equation}
for all $\ab,\bb,\ub,\vb\in \Rct$,
so that, by \eqref{feknd} and \eqref{eq005b},
$\gyruvb$ is an automorphism of the Einstein groupoid $(\Rct,\op)$.

\section{Gyrogroups}\label{secc4}

Taking the key features of the Einstein groupoid $(\Rct,\op)$ as
axioms, and guided by analogies with groups,
we are led to the formal gyrogroup definition in which
gyrogroups turn out to form a most natural generalization of groups.
Definitions related to groups and gyrogroups thus follow.

\begin{ddefinition}\label{defgroup01}
{\bf (Groups).}
{\it
A groupoid $(G,\,+)$ is a group if its binary operation satisfies the following axioms.
In $G$ there is at least one element, 0, called a left identity, satisfying

\noindent
(G1) \hspace{1.2cm} 0+a=a

\noindent
for all $a\in G$. There is an element $0\in G$ satisfying axiom $(G1)$ such
that for each $a\in G$ there is an element $-a\in G$, called a left inverse of $a$,
satisfying

\noindent
(G2) \hspace{1.2cm} $-a+a=0$

\noindent
Moreover, the binary operation obeys the associative law

\noindent
(G3) \hspace{1.2cm} $(a+b)+c = a+(b+c)$

\noindent
for all $a,b,c\in G$.
}
\end{ddefinition}

Groups are classified into commutative and noncommutative groups.

\begin{ddefinition}\label{defgroup02}
{\bf (Commutative Groups).}
{\it
A group $(G,\,+)$ is commutative if
its binary operation obeys the commutative law

\noindent
(G6) \hspace{1.2cm} $a+b=b+a$

\noindent
for all $a,b\in G$.
}
\end{ddefinition}

\begin{ddefinition}\label{defroupx}
{\bf (Gyrogroups).}
{\it
A groupoid $(G , \op )$
is a gyrogroup if its binary operation satisfies the following axioms.
In $G$ there is at least one element, $0$, called a left identity, satisfying

\noindent
(G1) \hspace{1.2cm} $0 \op a=a$

\noindent
for all $a \in G$. There is an element $0 \in G$ satisfying axiom $(G1)$ such
that for each $a\in G$ there is an element $\om a\in G$, called a
left inverse of $a$, satisfying

\noindent
(G2) \hspace{1.2cm} $\om a \op a=0\,.$

\noindent
Moreover, for any $a,b,c\in G$ there exists a unique element $\gyr[a,b]c \in G$
such that the binary operation obeys the left gyroassociative law

\noindent
(G3) \hspace{1.2cm} $a\op(b\op c)=(a\op b)\op\gyrab c\,.$

\noindent
The map $\gyr[a,b]:G\to G$ given by $c\mapsto \gyr[a,b]c$
is an automorphism of the groupoid $(G,\op)$, that is,

\noindent
(G4) \hspace{1.2cm} $\gyrab\in\Aut (G,\op) \,,$

\noindent
and the automorphism $\gyr[a,b]$ of $G$ is called
the gyroautomorphism, or the gyration, of $G$ generated by $a,b \in G$.
The operator $\gyr : G\times G\rightarrow\Aut (G,\op)$ is called the
gyrator of $G$.
Finally, the gyroautomorphism $\gyr[a,b]$ generated by any $a,b \in G$
possesses the left loop property

\noindent
(G5) \hspace{1.2cm} $\gyrab=\gyr [a\op b,b] \,.$
}
\end{ddefinition}

The gyrogroup axioms ($G1$)\,--\,($G5$)
in Def.~\ref{defroupx} are classified into three classes:
\begin{enumerate}
\item
The first pair of axioms, $(G1)$ and $(G2)$ in Def.~\ref{defroupx},
is a reminiscent of the group axioms $(G1)$ and $(G2)$ in Def.~\ref{defgroup01}.
\item
The last pair of axioms, $(G4)$ and $(G5)$ in Def.~\ref{defroupx}, presents the gyrator
axioms.
\item
The middle axiom, $(G3)$ in Def.~\ref{defroupx},
is a hybrid axiom linking the two pairs of axioms in items (1) and (2).
\end{enumerate}

As in group theory, we use the notation
$a \om b = a \op (\om b)$
in gyrogroup theory as well.

In full analogy with groups, gyrogroups are classified into gyrocommutative and
non-gyrocommutative gyrogroups.

\begin{ddefinition}\label{defgyrocomm}
{\bf (Gyrocommutative Gyrogroups).}
{\it
A gyrogroup $(G, \oplus )$ is gyrocommutative if
its binary operation obeys the gyrocommutative law

\noindent
(G6) \hspace{1.2cm} $a\oplus b=\gyrab(b\oplus a)$

\noindent
for all $a,b\in G$.
}
\end{ddefinition}

While it is clear how to define a right identity and a right inverse
in a gyrogroup,
the existence of such elements is not presumed. Indeed, the existence
of a unique identity and a unique inverse, both left and right, is a
consequence of the gyrogroup axioms, as the following theorem shows, along with
other immediate results.

\begin{ttheorem}\label{thm2d1}
{\bf (First Gyrogroup Properties).}
Let $(G,\, \op )$ be a gyrogroup. For any elements
$a,b,c,x\in G$ we have the following results:
\begin{enumerate}
\item \label{qwhdb01}
If $a\op b=a\op c$, then $b=c$ (general left cancellation law;
see item (9) below).
\item \label{qwhdb02}
$\gyr [0,a]=I$ for any left identity $0$ in $G$.
\item \label{qwhdb03}
$\gyr [x,a]=I$ for any left inverse $x$ of $a$ in $G$.
\item \label{qwhdb04}
$\gyr [a,a]=I$
\item \label{qwhdb05}
There is a left identity which is a right identity.
\item \label{qwhdb06}
There is only one left identity.
\item \label{qwhdb07}
Every left inverse is a right inverse.
\item \label{qwhdb08}
There is only one left inverse, $\om a$, of $a$, and $\om (\om a)=a$.
\item \label{qwhdb09}
The Left Cancellation Law:
\begin{equation} \label{leftcanc7}
\om a\op (a\op b)=b
\end{equation}
\item \label{qwhdb10}
The Gyrator Identity:
\begin{equation} \label{gyrator7}
\gyrab x = \om (a\op b)\op  \{ a\op (b\op x) \}
\end{equation}
\item \label{qwhdb11}
$\gyrab 0 = 0\,.$
\item \label{qwhdb12}
$\gyrab (\om x) = \om \gyrab x\,.$
\item \label{qwhdb13}
$\gyr [a,0]=I\,.$
\end{enumerate}
\end{ttheorem}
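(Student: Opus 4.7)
The plan is to prove the thirteen items in a carefully chosen order so that each one relies only on the gyrogroup axioms (G1)--(G5) and items already proved, with the bijectivity of gyrations—guaranteed by the automorphism axiom (G4)—serving as the main lever for cancellation arguments.

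First I will establish the foundational chain (1), (2), (3), (9). For (1), pick any left inverse $x$ of $a$; applying the left gyroassociative law (G3) to $x\op(a\op b)$ and to $x\op(a\op c)$ collapses the first factor via (G2) and yields $\gyr[x,a]b=\gyr[x,a]c$, whence bijectivity of the gyroautomorphism forces $b=c$. Next, (2) follows by rewriting $a\op c=0\op(a\op c)=(0\op a)\op\gyr[0,a]c=a\op\gyr[0,a]c$ via (G1) and (G3) and applying (1). Item (3) is then immediate from the left loop property (G5): $\gyr[x,a]=\gyr[x\op a,a]=\gyr[0,a]=I$ whenever $x\op a=0$. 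Finally, (9) drops out of (G3), (G2), and (3) applied to $x=\om a$.

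Next I will prove (5)--(8) and then (10)--(13). For (7), set $c=a\op\om a$; by (9), $\om a\op c=\om a$, and pre-composing both sides with a left inverse of $\om a$ and invoking (3) yields $c=0$. Item (5) then follows from $a\op 0=a\op(\om a\op a)=(a\op\om a)\op\gyr[a,\om a]a$, where (G5) and (2) combined with (7) give $\gyr[a,\om a]=\gyr[a\op\om a,\om a]=\gyr[0,\om a]=I$. Items (6) and (8) are then standard uniqueness arguments drawing on (5), (7), and (1), with $\om(\om a)=a$ a direct byproduct of (8). The derived identities (10)--(13) are almost mechanical: (10) comes from left-cancelling $\om(a\op b)\op\bigl((a\op b)\op\gyr[a,b]x\bigr)=\gyr[a,b]x$ using (9) after applying (G3) to the inner term; (11) follows by taking $x=0$ in (10) and using (5); (12) is obtained by applying the automorphism $\gyr[a,b]$ to $\om x\op x=0$ and invoking (11) and (8); and (13) is proved exactly like (2), rewriting $a\op c=a\op(0\op c)=(a\op 0)\op\gyr[a,0]c=a\op\gyr[a,0]c$ via (G1), (G3), and (5), then applying (1).

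The main obstacle is item (4), the assertion $\gyr[a,a]=I$. The naive strategy used for (2), (3), and (13) fails: (G5) only gives $\gyr[a,a]=\gyr[a\op a,a]$, and $a\op a$ is generally nonzero, so we cannot reduce the first argument to $0$ or to a left inverse. My plan is to derive the right loop property $\gyr[a,b\op a]=\gyr[a,b]$ as an auxiliary lemma from the machinery of (1)--(13), and then obtain (4) by specialising $b=0$: $\gyr[a,a]=\gyr[a,0\op a]=\gyr[a,0]=I$ using (G1) and (13). The right loop derivation is itself the subtlest step; the idea is to expand $a\op(b\op(a\op x))$ in two different ways, one using (G3) on the inner pair and then again on the outer pair, the other using the homomorphism property of $\gyr[a,b]$ from (G4), equate the two resulting expressions, cancel the common left factor $a\op(b\op a)$ via (1), and then peel off a common $\gyr[a,b]x$ factor via injectivity (G4) to read off $\gyr[a,b\op a]=\gyr[a,b]$. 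This is where the real work lies.
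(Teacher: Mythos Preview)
Your treatment of items (1)--(3), (5)--(13) is essentially the paper's argument, up to a harmless reordering (you prove (9) and (7) before (5), which works fine).  The paper's proof of (5) is slightly different---it computes $x\op(a\op 0)=(x\op a)\op\gyr[x,a]0=0=x\op a$ directly and then cancels via (1)---but your route through (7) is just as valid.

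The real discrepancy is item (4), which you flag as ``the main obstacle'' and propose to attack via a derivation of the right loop property.  This is a misdiagnosis: (4) is in fact a one-line consequence of (G5) and item (2).  You applied the left loop property only in the form $\gyr[a,a]=\gyr[a\op a,a]$, which indeed goes nowhere; but read the identity $\gyr[a,b]=\gyr[a\op b,b]$ with $a=0$ and $b=a$ instead.  That gives
\[
\gyr[0,a]=\gyr[0\op a,\,a]=\gyr[a,a],
\]
so $\gyr[a,a]=\gyr[0,a]=I$ by (2).  That is exactly the paper's proof.

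So the entire right-loop plan is unnecessary for this theorem.  (It is also not clear that your sketched derivation of the right loop property actually closes: expanding $a\op(b\op(a\op x))$ in the two ways you describe yields $\gyr[a,b\op a]\gyr[b,a]=\gyr[a\op b,\gyr[a,b]a]\gyr[a,b]$, not a common factor $\gyr[a,b]$ that can simply be peeled off.  The right loop property is true, but its standard proof requires more than you outline.)
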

\begin{proof}
\noindent
\begin{enumerate}
\item \label{qwhdc01}
Let $x$ be a left inverse of $a$ corresponding to a left identity, $0$, in
$G$. We have $x \op  (a \op  b)$ = $x \op  (a \op  c)$, implying
$(x \op  a) \op \gyr [x,a]b$ = $(x \op  a) \op \gyr [x,a]c$
by left gyroassociativity.
Since $0$ is a left identity,
$\gyr [x,a]b=\gyr [x,a]c$.
Since automorphisms are bijective, $b=c$.
\item \label{qwhdc02}
By left gyroassociativity we have for any left identity $0$ of $G$,
$a\op x$ = $0\op (a\op x)$ = $(0\op a)\op \gyr [0,a]x$ = $a\op \gyr [0,a]x$.
Hence, by item \ref{qwhdc01} above we have
$x=\gyr [0,a]x$ for all $x\in G$ so that $\gyr [0,a]=I$.
\item \label{qwhdc03}
By the left loop property and by item \ref{qwhdc02} above we have
$\gyr [x,a]=\gyr [x\op a,a]=\gyr [0,a]=I$.
\item \label{qwhdc04}
Follows from an application of the left loop property and item \ref{qwhdc02} above.
\item \label{qwhdc05}
Let $x$ be a left inverse of $a$ corresponding to a left identity, $0$, of $G$.
Then by left gyroassociativity and item \ref{qwhdc03} above,
$x\op (a\op 0)$ = $(x\op a)\op \gyr [x,a]0=0\op 0=0=x\op a$.
Hence, by (1), $a\op 0=a$ for all $a\in G$ so that $0$ is a right
identity.
\item \label{qwhdc06}
Suppose $0$ and $0^*$ are two left identities, one of which, say $0$, is
also a right identity. Then
$0=0^* \op 0=0^*$.
\item \label{qwhdc07}
Let $x$ be a left inverse of $a$. Then
$x\op (a\op x)$ = $(x\op a)\op \gyr [x,a]x$ = $0\op x=x=x\op 0$,
by left gyroassociativity, (G2) of Def.~\ref{defroupx} and items
\ref{qwhdc03}, \ref{qwhdc05}, \ref{qwhdc06} above.
By item \ref{qwhdc01} we have $a\op x=0$ so that $x$ is a right inverse of $a$.
\item \label{qwhdc08}
Suppose $x$ and $y$ are left inverses of $a$. By item \ref{qwhdc07} above, they are also
right inverses, so $a\op x=0=a\op y$. By item \ref{qwhdc01}, $x=y$.
Let $\om a$ be the resulting unique inverse of $a$. Then $\om a\op a=0$ so that the inverse
$\om (\om a)$ of $\om a$ is $a$.
\item \label{qwhdc09}
By left gyroassociativity and by \ref{qwhdc03} we have
\begin{equation} \label{rendk1}
\om a\op (a\op b)=(\om a\op a)\op \gyr [\om a,a]b=b
\end{equation}
\item \label{qwhdc10}
By an application of the left cancellation law in item \ref{qwhdc09}
to the left gyroassociative law (G3) in Def.~\ref{defroupx} we obtain the result
in item \ref{qwhdc10}.
\item \label{qwhdc11}
We obtain item \ref{qwhdc11} from \ref{qwhdc10} with $x=0$.
\item \label{qwhdc12}
Since $\gyrab$ is an automorphism of $(G,\op )$ we have from \ref{qwhdc11}
\begin{equation} \label{rendk2}
\gyrab(\om x)\op \gyrab x=\gyrab(\om x\op x)=\gyrab 0=0
\end{equation}
and hence the result.
\item \label{qwhdc13}
We obtain item \ref{qwhdc13} from \ref{qwhdc10} with $b=0$,
and a left cancellation, item \ref{qwhdc09}.\\[-23pt]
\end{enumerate}
\end{proof}

Einstein addition admits scalar multiplication, giving rise to
gyrovector spaces.
Studies of gyrogroup theory and gyrovector space theory along with
applications in hyperbolic geometry and in
Einstein's special theory of relativity are presented in
\cite{mybook01,mybook02,mybook03,mybook04,mybook06,mybook05}.

We thus see in this section that
Einsteinian velocity addition, $\op$, in $\Rct$ of relativistically admissible velocities
gives rise to the gyrocommutative gyrogroup $(\Rct,\op)$,
just as
Newtonian velocity addition, $+$, in $\Rt$ of classical, Newtonian velocities
gives rise to the commutative group $(\Rt,+)$.
Newtonian velocity addition, in turn, is given by the common vector addition in $\Rt$.
Clearly, it is owing to the presence of nontrivial gyrations that gyrogroups are
generalized groups.
Accordingly, gyrations provide the missing link between
the common vector addition and Einstein velocity addition.

\section{The Euclidean and Hyperbolic Lines}\label{subeuclin}


  
\begin{figure}[t]  
 \centering         
%
\psfrag{a}[]{$\,\,A$}
\psfrag{b}[]{$\! B$}
\psfrag{mab}{$m_{_{A,B}}$}
\psfrag{pab}{$P$}
\psfrag{formula00}[]
{$d(A,P) + d(P,B)=d(A,B)$}
\psfrag{formula01}[]{$\boxed{A + ( -  A +  B) t}$}
\psfrag{formula02}[]{$-\infty\le t \le\infty$}
\psfrag{formula03}{$m_{_{A,B}} = A + ( -  A +  B)\half$}
\psfrag{formula04}{$d(A,B) = \|A -  B\|$}
\psfrag{formula05}{$d(A,m_{_{A,B}}) = d(B,m_{_{A,B}})$}
%
 \includegraphics[width=9cm]{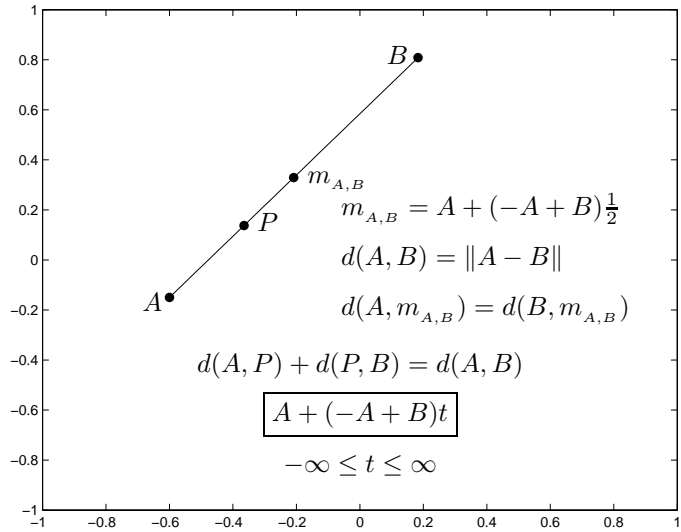}
\caption{
The Cartesian coordinates for the Euclidean plane $\Rtwo$,
$(x_1,x_2)$, $x_1^2 + x_2^2 < \infty$, are shown.
The points $A$ and $B$ in the Euclidean plane $\Rtwo$
are given, with respect to these Cartesian coordinates, by
$A = (-0.60,-0.15)$ and $B = (0.18,0.80)$.
The distance function, $d(A,B)=\|A-B\|$, and the equation of the
line through $A$ and $B$ are shown along with the triangle equality
for a generic point $P$ on the line.
The midpoint $m_{_{A,B}}$ of the segment $AB$ is reached when the line parameter is $t=1/2$.
\label{fig163aceuc3m}}
\end{figure}

As shown in Figs.~\ref{fig163aceuc3m}\,--\,\ref{fig163acein3m},
we introduce Cartesian coordinates into $\Rn$ in the usual way
in order to specify uniquely each point $P$ of the Euclidean $n$-space $\Rn$
by an $n$-tuple of real numbers, called the coordinates, or components, of $P$.
Cartesian coordinates provide a method of indicating the position of points
and rendering graphs on a two-dimensional Euclidean plane $\Rtwo$
and in a three-dimensional Euclidean space $\Rt$.

As an example, Fig.~\ref{fig163aceuc3m} presents a Euclidean plane $\Rtwo$
equipped with a Cartesian coordinate system $\Sigma$.
The position of points $A$ and $B$ and their midpoint $M_{AB}$ with respect to
$\Sigma$ are shown.

The set of all points
\begin{equation} \label{fknced}
A + (-A+B)t
\end{equation}
$t\in\Rb$, forms a Euclidean line. The segment of this line, corresponding to
$1\le t\le1$, and a generic point $P$ on the segment, are shown in
Fig.~\ref{fig163aceuc3m}. Being collinear, the points $A,P$ and $B$ obey the
triangle equality $d(A,P)+d(P,B)=d(A,B)$, where $d(A,B)=\|-A+B\|$ is the
Euclidean distance function in $\Rn$.

Fig.~\ref{fig163aceuc3m} demonstrates the use of the standard Cartesian model
of Euclidean geometry for graphical presentations.
In a fully analogous way, Fig.~\ref{fig163acein3m} demonstrates the use of the
Cartesian-Beltrami-Klein model of hyperbolic geometry; see also
\cite[Figs.~2.3-2.4]{mybook05}.

  
\begin{figure}[t]  
 \centering         
%
\psfrag{a}[]{$\,\,A$}
\psfrag{b}[]{$\! B$}
\psfrag{mab}{$m_{_{A,B}}$}
\psfrag{pab}{$P$}
\psfrag{formula00}[]
{$d(A,P) \op d(P,B)=d(A,B)$}
\psfrag{formula01}[]{$\boxed{A\op(\om A\op B)\od t}$}
\psfrag{formula02}[]{$-\infty\le t \le\infty$}
\psfrag{formula03}{$m_{_{A,B}} = A\op(\om A\op B)\od\half$}
\psfrag{formula04}{$d(A,B) = \|A\om B\|$}
\psfrag{formula05}{$d(A,m_{_{A,B}}) = d(B,m_{_{A,B}})$}
%
 \includegraphics[width=9cm]{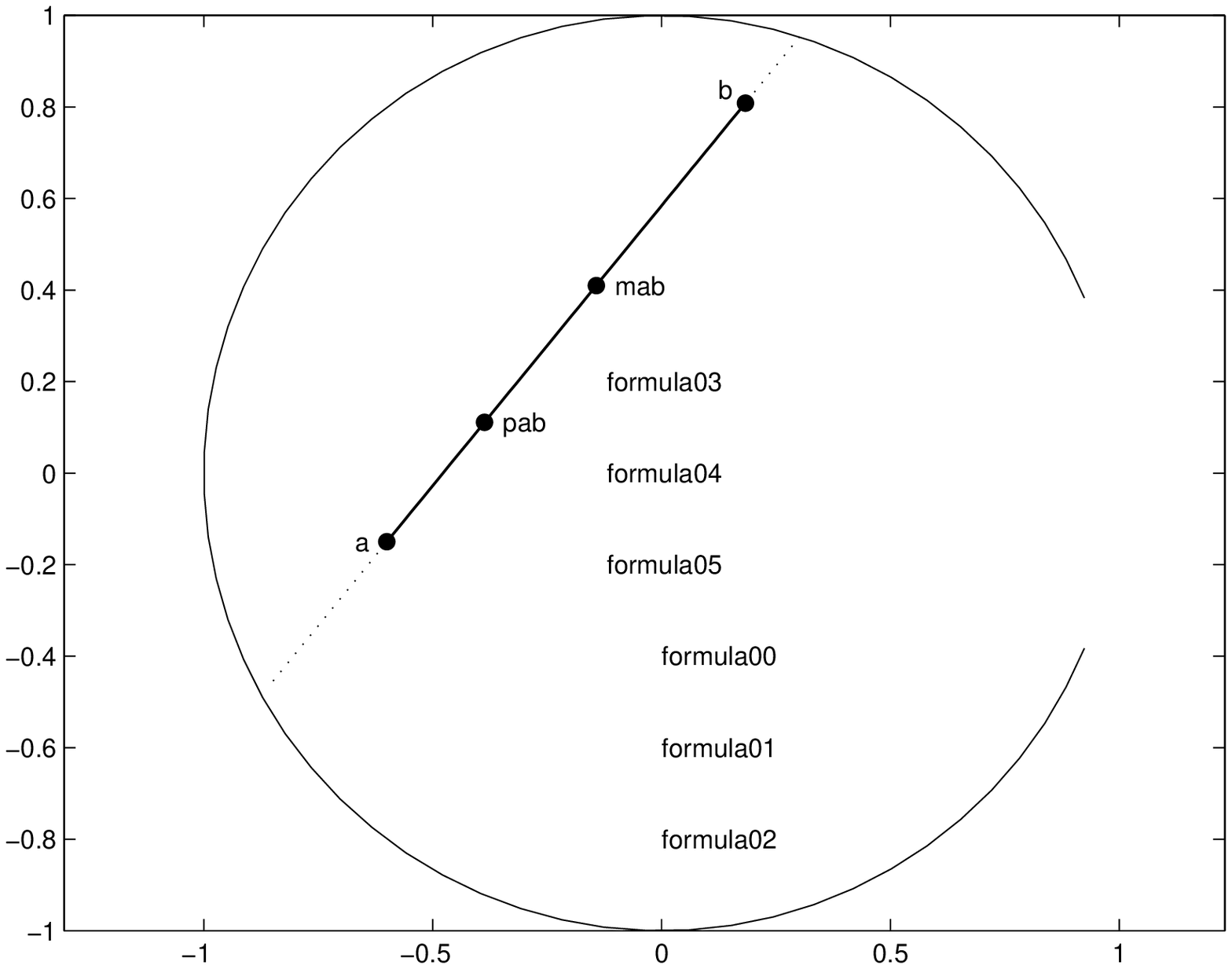}
\caption{
The Cartesian coordinates for the unit disc in the Euclidean plane $\Rtwo$,
$(x_1,x_2)$, $x_1^2 + x_2^2 <1$, are shown.
These, inside the disc, are mow considered the Cartesian coordinates for the
Einstein gyrovector plane $(\Rtwo,\op,\od)$.
The points $A$ and $B$ in the Einstein gyrovector plane $(\Rtwo,\op,\od)$ 
are given, with respect to these Cartesian coordinates, by
$A = (-0.60,-0.15)$ and $B = (0.18,0.80)$.
The gyrodistance function, $d(A,B)=\|A\om B\|$, and the equation of the
gyroline through $A$ and $B$ are shown along with the gyrotriangle equality
for a generic point $P$ on the gyroline.
The gyromidpoint $m_{_{A,B}}$ of the gyrosegment $AB$,
reached when the gyroline parameter is $t=1/2$,
shares obvious analogies with its Euclidean counterpart,
the midpoint in Fig.~\ref{fig163aceuc3m}.
\label{fig163acein3m}}
\end{figure}

Now, let $A,B\in\Rsn$ be two distinct points of the Einstein gyrovector space
$(\Rsn,\op,\od)$, and let $t\in\Rb$ be a real parameter. Then,
in full analogy with the Euclidean line \eqref{fknced},
the graph of the set of all points, Fig.~\ref{fig163acein3m},
\begin{equation} \label{eqcurve94}
A\op(\om A\op B)\od t
\end{equation}
$t\in\Rb$,
in the Einstein gyrovector space $(\Rsn,\op,\od)$ is a chord of the ball $\Rsn$.
As such, it is a geodesic line of the
Cartesian-Beltrami-Klein ball model of hyperbolic geometry.
The geodesic line \eqref{eqcurve94} is the unique geodesic passing through the
points $A$ and $B$. It passes through the point $A$
when $t=0$ and, owing to the left cancellation law, \eqref{leftcanc7},
it passes through the point $B$ when $t=1$.
Furthermore, it passes through the midpoint $M_{ {A,B}}$ of $A$ and $B$
when $t=1/2$.
Accordingly, the {\it gyrosegment} that joins the points
$A$ and $B$ in
Fig.~\ref{fig163acein3m}
is obtained from gyroline \eqref{eqcurve94} with $0\le t\le1$.

\section{Thomas Precession}\label{secgm}

It is owing to the gyrocommutative law, \eqref{laws00}, of Einstein addition that
Thomas precession of Einstein's special theory of relativity
is recognized as a concrete example of the abstract gyrogroup gyration
in Def.~\ref{defroupx}.
Accordingly, the gyrogroup gyration is an extension by abstraction
of the relativistic mechanical effect known as Thomas precession.

The gyrocommutative law of Einstein velocity addition
was already known to Silberstein in 1914 \cite{silberstein14} in the following sense:
According to his 1914 book,
Silberstein knew that the Thomas precession generated by $\ub,\vb\in\Rct$
is the unique rotation
that takes $\vb\op\ub$ into $\ub\op\vb$ about an axis perpendicular to the
plane of $\ub$ and $\vb$ through an angle $< \pi$ in $\Rt$,
thus giving rise to the gyrocommutative law.
However, obviously, Silberstein did not use the terms
``Thomas precession'' and ``gyrocommutative law''.
These terms have been coined
later, respectively,
(i) following Thomas' 1926 paper \cite{thomas26},
and (ii) in 1991 \cite{grouplike,gaxioms}, following the discovery of the
accompanying gyroassociative law of Einstein addition
in 1988 \cite{parametrization}.

A description of the 3-space rotation, which since 1926 is named
after Thomas,
is found in Silberstein's 1914 book \cite{silberstein14}.
In 1914 Thomas precession did not have a name, and
Silberstein called it in his 1914 book a ``certain space-rotation''
\cite[p.~169]{silberstein14}.
An early study of Thomas precession, made by the famous
mathematician Emile Borel in 1913, is described in his 1914 book
\cite{borel14} and, more recently, in \cite{stachel95}.
According to Belloni and Reina \cite{belloni86},
Sommerfeld's
route to Thomas precession dates back to 1909.
However, prior to Thomas discovery the relativistic peculiar
3-space rotation had a most uncertain physical status
\cite[p.~119]{walter99b}.
The only knowledge Thomas had in 1925 about the peculiar relativistic
gyroscopic precession \cite{jonson07},
however, came from De Sitter's formula describing the
relativistic corrections for the motion of the moon, found in
Eddington's book \cite{eddington24}, which was just published at that time
\cite[Sec.~1, Chap.~1]{mybook01}.

The physical significance of the peculiar rotation in
special relativity emerged in 1925 when Thomas relativistically re-computed
the precessional frequency of the doublet separation in the fine structure
of the atom, and thus rectified a missing factor of 1/2. This correction has
come to be known as the
{\it Thomas half} \cite{chrysos06}, presented in \eqref{kurni109}.
Thomas' discovery of the relativistic precession of the
electron spin on Christmas 1925 thus led to the understanding of the
significance of the relativistic
effect which became known as {\it Thomas precession}.
Llewellyn Hilleth Thomas died in
Raleigh, NC, on April 20, 1992.
A paper \cite{bloch02} dedicated to the centenary of the birth of
Llewellyn H. Thomas (1902\,--\,1992) describes the
Bloch gyrovector of quantum information and computation.

Once recognized as gyration, it is clear that Thomas precession
owes its existence solely to the nonassociativity of
Einstein addition of Einsteinian velocities. Accordingly, Thomas precession has
no classical counterpart since the addition of classical,
Newtonian velocities is associative.

It is widely believed that special relativistic effects are negligible when
the velocities involved are much less than the vacuum speed of light $c$.
Yet, Thomas precession effect in the orbital motion of spinning
electrons in atoms
is clearly observed in resulting spectral lines despite the
speed of electrons in atoms being small compared with the speed of light.
One may, therefore, ask whether it is possible to furnish a classical
background to Thomas precession
\cite{keown}.
Hence, it is important to realize that Thomas precession stems from
the nonassociativity of Einsteinian velocity addition, so that it has no echo in
Newtonian velocities.

In 1966, Ehlers, Rindler and Robinson \cite{rindler66}
proposed a new formalism for dealing with the Lorentz group. Their formalism,
however, did not find its way to the mainstream literature. Therefore,
thirty three years later, two of them suggested considering the
``notorious Thomas precession formula''
(in their words \cite[p.~431]{rindler99})
as an indicator of the quality of a formalism for dealing with the Lorentz
group. The idea of Rindler and Robinson to use the
``notorious Thomas precession formula''
as an indicator works fine in the analytic hyperbolic geometric viewpoint
of special relativity, where the
ugly duckling of special relativity, the
``notorious Thomas precession formula'',
becomes the beautiful swan, the ``gyrostructure'', of special relativity
and its underlying analytic hyperbolic geometry.
The abstract Thomas precession, called gyration, is now recognized as the
missing link between
classical mechanics with its underlying Euclidean geometry
and
relativistic mechanics with its underlying hyperbolic geometry.

\section{Thomas Precession Matrix}\label{secfv}

For any two vectors $\ab,\bb\in\Rt$, $\ab=(a_1,a_2,a_3)$, {\it etc.},
determined by their components
with respect to a given Cartesian coordinate system, we define the
square $3 \times 3$ matrix $\Omega(\ab,\bb)$ by the equation
\begin{equation}
\Omega (\ab,\bb) = -
\begin{pmatrix}
                   a_1 b_1 & a_1 b_2 & a_1 b_3 \\
                   a_2 b_1 & a_2 b_2 & a_2 b_3 \\
                   a_3 b_1 & a_3 b_2 & a_3 b_3
\end{pmatrix}
                    +
\begin{pmatrix}
                   a_1 b_1 & a_2 b_1 & a_3 b_1 \\
                   a_1 b_2 & a_2 b_2 & a_3 b_2 \\
                   a_1 b_3 & a_2 b_3 & a_3 b_3
\end{pmatrix}
\end{equation}
or, equivalently,
\begin{equation}
\Omega (\ab,\bb ) = -
\begin{pmatrix}
                 0     &  \omega_3 & -\omega_2  \\
             -\omega_3 &     0     &  \omega_1  \\
              \omega_2 & -\omega_1 &     0      \\
\end{pmatrix}
\end{equation}
where
\begin{equation}
\wb = (\omega_1 ,\, \omega_2 ,\, \omega_3 ,\, ) = \ab \times \bb
\end{equation}
Accordingly,
\begin{equation}
\Omega(\ab,\bb)\xb = (\ab\times\bb)\times\xb = -\ab(\bb\ccdot\xb)+\bb(\ab\ccdot\xb)
\end{equation}
for any $\xb\in\Rt$. Hence,
\begin{enumerate}
\item
$\Omega(\ab,\bb)=0$ if and only if
$\ab\times\bb=\bz$;
\item
and
\begin{equation}\label{omegazero}
\Omega(\ab,\bb)(\ab\times\bb) = \zerb
\end{equation}
\item
and,
for $\Omega = \Omega(\ab,\bb)$,
\begin{equation}
\Omega ^3 =- (\ab\times\bb)^2 \Omega
\end{equation}
\end{enumerate}

The matrix $\Omega=\Omega(\ub,\vb)$ can be used to simplify the presentation of
both Einstein addition $\ub\op\vb$ and its associated gyration $\gyruvb$,
\begin{equation}
\label{eqEA01}                                 
{\ub}\op{\vb}=\frac{1}{\unpuvc}
\left\{ \ub + \vb -\frac{1}{c^{2}}\frac{\gamma _{{\ub}}}{1+\gamma _{\ub}}
\Omega \ub \right\}
\end{equation}
\begin{equation} \label{gyromega}
\gyruvb = I + \alpha\Omega + \beta\Omega^2
\end{equation}
where $I$ is the $3\times 3$ identity matrix, and where
 \begin{equation} \label{albet}
 \begin{split}
\alpha&=\alpha(\ub,\vb)=-\frac{1}{c^2}
\frac{\gub\gvb(1+\gub+\gvb+\gupvb)}{(1+\gub)(1+\gvb)(1+\gupvb)} \\[6pt]
\beta &=\beta (\ub,\vb)=\quad \frac{1}{c^4}
\frac{\gubs \gvbs}{(1+\gub)(1+\gvb)(1+\gupvb)} \\
 \end{split}
 \end{equation}
satisfying $\alpha < 0$, $\beta > 0$, and
\begin{equation}
\alpha^2+[\ub^2\vb^2-(\ub\ccdot\vb)^2]\beta^2-2\beta=0
\label{002}
\end{equation}
for all $\ub,\vb\in\Rct$.

The gyration matrix $\gyruvb$ in \eqref{gyromega} satisfies the cubic equation
 \begin{equation} \label{trace}
 \begin{split}
\gyr^3 [\ub,\vb]&- \trace(\gyruvb)\gyr^2 [\ub,\vb] \\[3pt]
                &+ \trace(\gyruvb)\gyr [\ub,\vb] - I = 0
 \end{split}
 \end{equation}
called the {\it trace identity}.

The trace identity \eqref{trace}
characterizes $3 \times 3$ matrices that represent
proper rotations of the Euclidean 3-space $\Rt$ about its origin.

The matrix representation of $\gyruvb$ in $\Rt$ relative to an
orthonormal basis is thus an orthogonal $3 \times 3$ matrix with
determinant 1. It follows from \eqref{omegazero} and \eqref{gyromega}
that
\begin{equation}
\gyruvb (\ub \times \vb) = \ub \times \vb
\label{004}
\end{equation}
so that the vector $\ub\times\vb$ lies on the rotation axis of the
gyration $\gyruvb$.

Interesting studies of the trace identity, using analysis, algebra and
geometry is found in an elementary form in
\cite{kalman89} and in a more advanced form in
\cite{gelman01,gelman02,gelman03,gelman04}.

\section{Thomas Precession Graphical Presentation}

 
\begin{figure}[t]  
 \centering         
\psfrag{x}{$x$}
\psfrag{xp}{$x^\prime$}
\psfrag{xpp}{$x^{\prime\prime}$}
\psfrag{y}{$y$}
\psfrag{yp}{$y^\prime$}
\psfrag{ypp}{$y^{\prime\prime}$}
\psfrag{theta}{$\theta$}
\psfrag{epsilon}{$\epsilon$}
\psfrag{Sigma}{$\Sigma$}
\psfrag{Sigmap}{$\Sigma^\prime$}
\psfrag{Sigmapp}{$\Sigma^{\prime\prime}$}
\psfrag{upv}{$\ub\op\vb$}
\psfrag{vpu}{$\vb\op\ub$}
\psfrag{ub}{$\ub$}
\psfrag{vb}{$\vb$}
 \psfrag{---chets1}[]{\hspace{-0.2cm}\lower-1.6ex \hbox {\normalsize{$\blacktriangleright$}}}
 \psfrag{---chets2}[]{\lower-1.2ex \hbox {\normalsize{$\blacktriangleright$}}}
 \psfrag{---chets3}[]{\lower-1.6ex \hbox {\normalsize{$\blacktriangleright$}}}
\includegraphics[width=9cm]{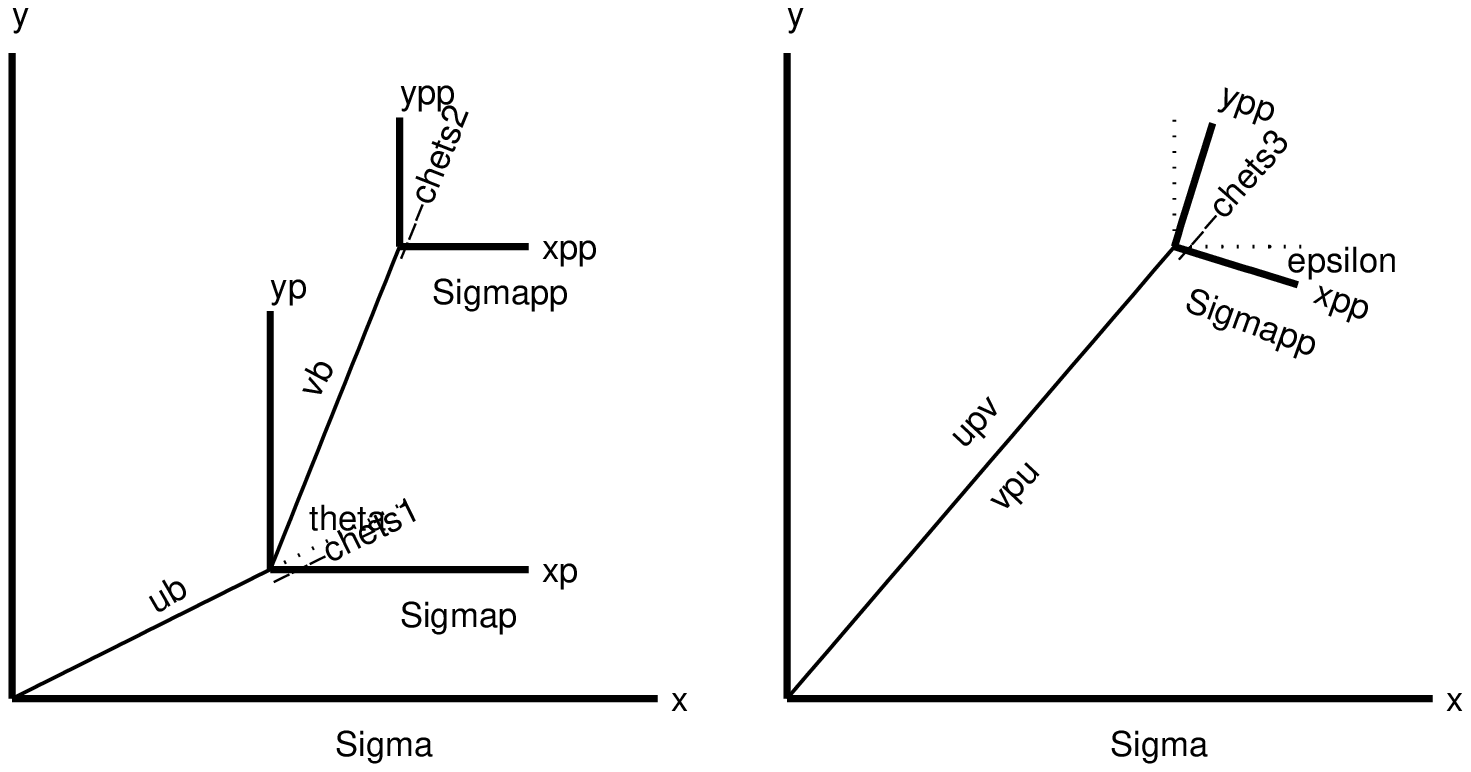}
\caption[The Thomas Precession]{
In the Euclidean plane $\Rtwo$
an inertial frame $\Sigma^{\prime\prime}$ moves uniformly, without rotation,
with velocity $\vb\in\Rctwo$ relative to inertial frame $\Sigma^\prime$.
The latter, in turn, moves uniformly, without rotation,
with velocity $\ub\in\Rctwo$ relative to inertial frame $\Sigma$.
Owing to the presence of Thomas precession,
the  inertial frame $\Sigma^{\prime\prime}$ moves uniformly, with rotation angle $\ep$,
with a composite velocity relative to the inertial frame $\Sigma$.
Is the composite velocity of $\Sigma^{\prime\prime}$ relative to $\Sigma$
$\ub\op\vb$ or $\vb\op\ub$?
The answer is: neither; see \eqref{hvudk}.
The Thomas precession signed angle $\ep$, $-\pi<\ep<\pi$,
turns out to be the unique rotation angle with
rotation axis parallel to $\ub\times\vb$ in $\Rct$
that takes $\vb\op\ub$ into $\ub\op\vb$ according to the
gyrocommutative law $\ub\op\vb=\gyruvb(\vb\op\ub)$.
Being related by \eqref{angle01},
the Thomas precession signed angle $\ep$
and its generating signed angle $\theta$ from $\ub$ and $\vb$
have opposite signs, illustrated
graphically in Figs.\,\ref{fig117}\,--\,\ref{fig118}.
\label{fig115}}
\end{figure}


Let $\Sigma^{\prime\prime}$, $\Sigma^{\prime}$ and $\Sigma$ be
three inertial frames in the Euclidean 3-space $\Rt$
with respective spatial coordinates
$(x^{\prime\prime}, y^{\prime\prime})$, $(x^{\prime}, y^{\prime})$ and $(x,y)$.
The third spatial coordinate of each frame is omitted for simplicity.
Accordingly, these are shown in Fig.~\ref{fig115} in $\Rtwo$ rather than $\Rt$.
Frame $\Sigma^{\prime\prime}$ moves with velocity $\vb\in\Rct$, without rotation,
relative to frame $\Sigma^{\prime}$ which, in turn,
moves with velocity $\ub\in\Rct$, without rotation, relative to frame $\Sigma$.
The angle between $\ub$ and $\vb$ is $\theta$, shown in Fig.~\ref{fig115},
satisfying
\begin{equation} \label{hufd1}
\cos\theta = \frac{\ub}{\|\ub\|} \ccdot \frac{\vb}{\|\vb\|}
\end{equation}
so that, by \eqref{rugh1ds},
\begin{equation} \label{hufd2}
\frac{1}{c^2} \gub\gvb \ub\ccdot\vb = \ggub\ggvb\cos\theta
\end{equation}

Observers at rest relative to $\Sigma$ and
observers at rest relative to $\Sigma^{\prime}$
agree that their coordinates $(x,y)$ and $(x^{\prime}, y^{\prime})$ are parallel.
Similarly,
observers at rest relative to $\Sigma^{\prime}$ and
observers at rest relative to $\Sigma^{\prime\prime}$
agree that their coordinates $(x^{\prime}, y^{\prime})$ and
$(x^{\prime\prime}, y^{\prime\prime})$ are parallel,
as shown in the left part of Fig.~\ref{fig115}.

Counterintuitively, if $\theta\ne0$ and $\theta\ne\pi$,
observers at rest relative to $\Sigma$ and
observers at rest relative to $\Sigma^{\prime\prime}$
agree that their coordinates are not parallel.
Rather, they find that their coordinates are oriented relative to each other
by a Thomas precession angle $\ep$, $0<\ep<\pi$,
as shown in the right part of Fig.~\ref{fig115}.

Let $\ub$ and $\vb$ be two nonzero vectors in the ball $\Rct$.
By the gyrocommutative law in \eqref{laws00},
the gyration $\gyruvb$ takes the composite velocity $\vb\op\ub$ into $\ub\op\vb$.
Indeed, $\gyruvb$ is the unique rotation
with rotation axis parallel to $\ub\times\vb$ that
takes $\vb\op\ub$ into $\ub\op\vb$ through the gyration angle $\ep$,
$0\le\ep<\pi$.
We call $\ep$ the Thomas precession (or, rotation) angle of the gyration $\gyruvb$,
and use the notation
\begin{equation} \label{sukeb}
\ep = \angle\gyruvb
\end{equation}

  
\begin{figure}[t]  
 \centering         
%
\psfrag{pi}[]{$\pi$}
\psfrag{tpi}[]{$2\pi$}
\psfrag{0}[]{$0$}
\psfrag{1}[]{$1$} \psfrag{-1}[]{$-1$}
\psfrag{theta}[]{$\theta$}
\psfrag{costheta}[][][0.5]{$\cos\theta ,~k=1.0$}
\psfrag{cosep}[]{$\cos\epsilon$}
%
 \includegraphics[width=9cm]{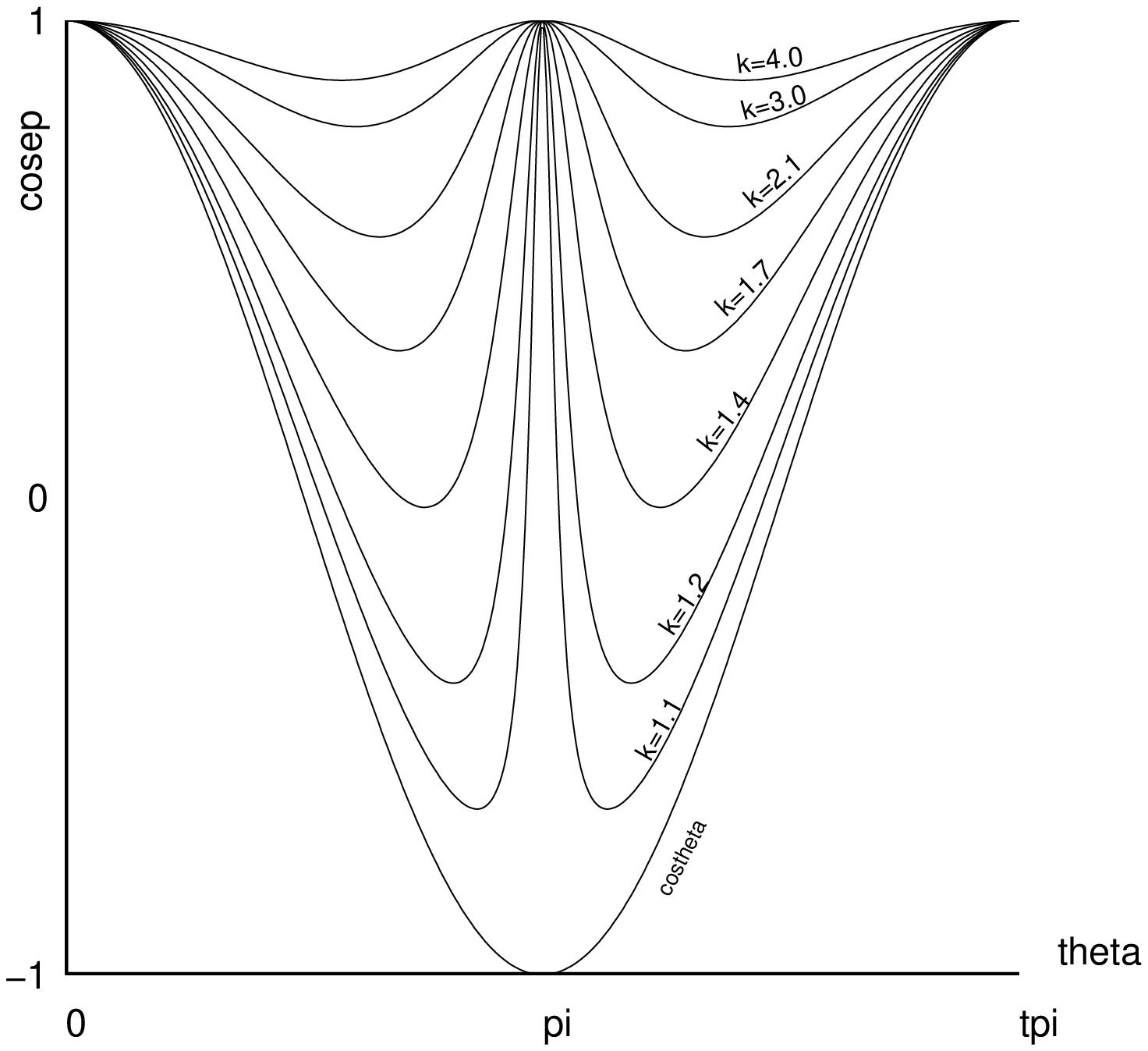}
\caption[Mobius Gyroparallelogram Addition Law]{
A graphical presentation of the cosine of the Thomas precession angle
$\ep$, $\cos\ep$,
\eqref{angle01},
as a function of the angle $\theta$ between its two generating
relativistically admissible velocities $\ub,\vb\in\Rt$ for several
values of $k$, $k$ being a function, \eqref{angle02}, of $\gub$ and $\gvb$.
\label{fig117cm}}
\end{figure}

Accordingly, the Thomas precession angle $\ep=\angle\gyruvb$ generated by
$\ub,\vb\in\Rct$, shown in the right part of Fig.\,\ref{fig115},
satisfies the equations
\begin{equation} \label{angle00}
 \begin{split}
\cos \ep &= \frac{(\ub\op\vb)\cdot(\vb\op\ub)}{\|\ub\op\vb\|^2} \\[8pt]
\sin \ep &= \pm \frac{\|(\ub\op\vb)\times(\vb\op\ub)\|}{\|\ub\op\vb\|^2}
 \end{split}
\end{equation}

The Herculean task of simplifying \eqref{angle00} was accomplished in
\cite{parametrization,noncomm,inner,grouplike}, obtaining
\begin{equation} \label{angle01}
 \begin{split}
\cos \ep &= \frac{(k+\cos\theta)^2-\sin^2\theta}
                  {(k+\cos\theta)^2+\sin^2\theta} \\[3pt]
\sin \ep &= \frac{-2(k+\cos\theta)\,\sin\theta}
                  {(k+\cos\theta)^2+\sin^2\theta} \\
 \end{split}
\end{equation}
where $\theta$, $0\le\theta < 2\pi$, is the angle between the
vectors $\ub,\vb\in\Rt$, forming the horizontal axes in
Figs.\,\ref{fig117cm}--\ref{fig118cm},
and where $k$, $k > 1$, is a velocity parameter given by the equation
\begin{equation} \label{angle02}
k^2= \frac{\gub+1}{\gub-1}\frac{\gvb+1}{\gvb-1}
\end{equation}

  
\begin{figure}[t]  
 \centering         
%
\psfrag{pi}[]{$\pi$}
\psfrag{tpi}[]{$2\pi$}
\psfrag{0}[]{$0$}
\psfrag{1}[]{$1$} \psfrag{-1}[]{$-1$}
\psfrag{theta}[]{$\theta$}
\psfrag{msinep}[]{$-\sin\epsilon$}
%
 \includegraphics[width=9cm]{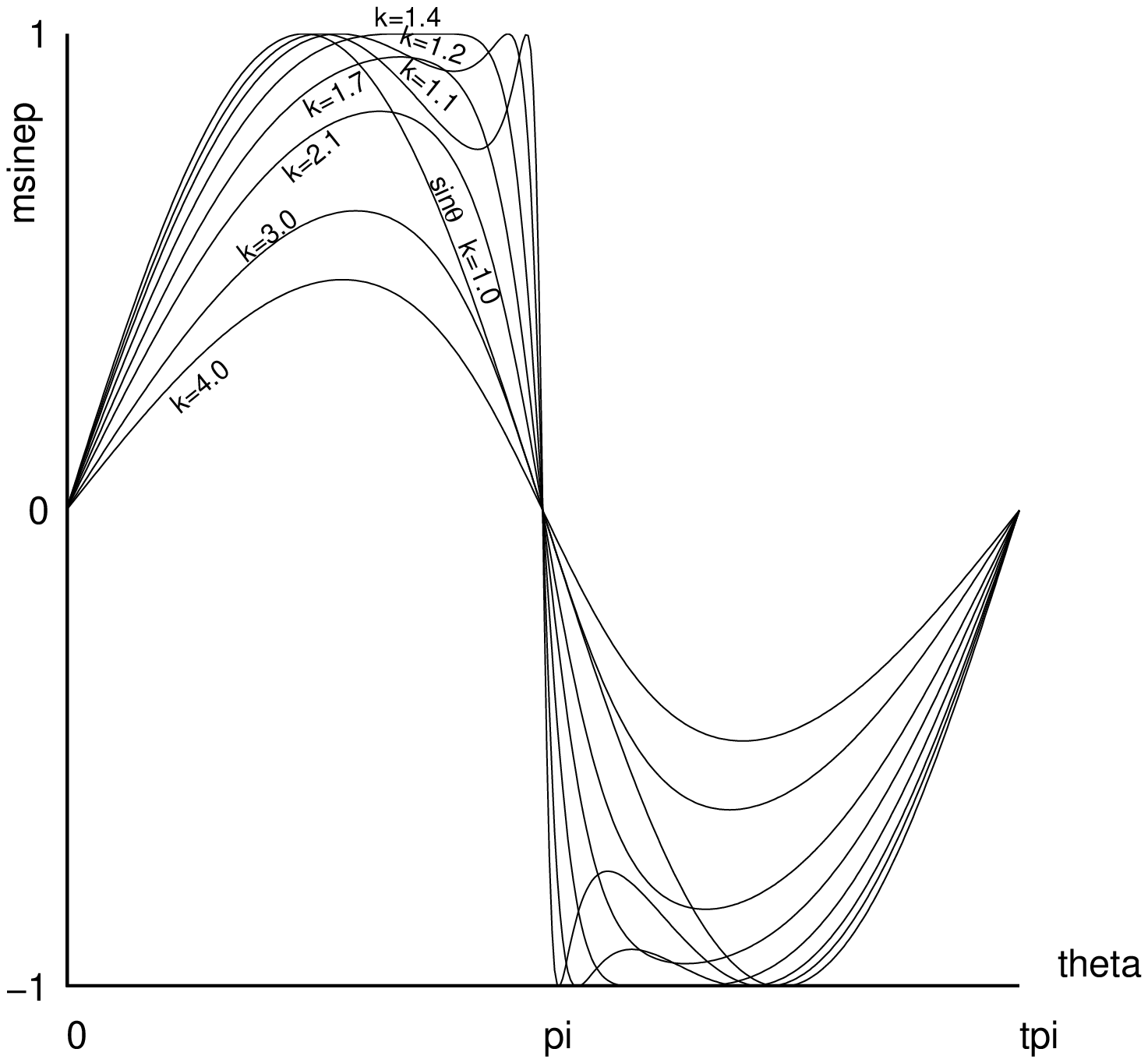}
\caption[Mobius Gyroparallelogram Addition Law]{
A graphical presentation of the negative sine of the Thomas precession angle
$\ep$, $-\sin\ep$,
\eqref{angle01},
as a function of the angle $\theta$ between its two generating
relativistically admissible velocities $\ub,\vb\in\Rt$ for several
values of $k$, $k$ being a function, \eqref{angle02}, of $\gub$ and $\gvb$.
\label{fig118cm}}
\end{figure}
 
The parameter $k$ approaches 1 when both $\|\ub\|$ and $\|\vb\|$
approach $c$. We clearly have the limits
 \begin{equation} \label{limits00}
 \begin{split}
 \lim_{k\rightarrow 1} \cos\ep &= \cos\theta \\
 \lim_{k\rightarrow 1} \sin\ep &=-\sin\theta
 \end{split}
 \end{equation}
for $0 \le \theta \le 2\pi$, $\theta\ne\pi$, seen in
Figs.\,\ref{fig117cm} and \ref{fig118cm}.

Figs.\,\ref{fig117cm} and \ref{fig118cm} present graphically
$\cos\ep$ and $-\sin\ep$ as functions of
$\theta$ for several values of $k$.
As expected, the graphs in these figures
show that for all values of the parameter $k$, $k > 1$, Thomas precession
angle $\ep$ vanishes when $\theta=0$, when $\theta=\pi$, and again,
when $\theta=2\pi$.
In the limit of high relativistic speeds approaching the vacuum speed of
light $c$,
$\|\ub\|,\|\vb\| \rightarrow c$,
the parameter $k$ approaches unity, $k \rightarrow 1$, and
$\ep \rightarrow -\theta$ for all $\theta$ in the punctured interval
$[0,\,\pi)\bigcup(\pi,\,2\pi]$.
The punctured interval is the union of the two connected intervals
$[0,\,\pi)$ and $(\pi,\,2\pi]$ which is the closed connected
interval $[0,2\pi]$ from which the point $\pi$ has been removed.
Thus, there is no Thomas precession angle $\pi$; see also \eqref{eq1g80}.

The extension by abstraction of Thomas precession into gyrations
enables the development of techniques that explain the non-existence
of a gyration whose rotation angle is $\pi$; see the Gyration
Exclusion Theorem in \cite[Theorem 3.36]{mybook03}.

As we see from Figs.\,\ref{fig117cm} and \ref{fig118cm},
the variation of $\ep$ for $0\le\theta\le 2\pi$ is over the interval
$[0,\,2\pi]$ punctured by a $k$-dependent subinterval centered at
$\ep=\pi$.

It is interesting to derive $\cos\frac{\ep}{2}$ and 
$\sin\frac{\ep}{2}$ from \eqref{angle01}:
\begin{equation} \label{angle03}
 \begin{split}
\cos \frac{\ep}{2} &= \pm\sqrt { \frac{1+\cos\ep}{2} }
= \frac{k+\cos\theta}{\sqrt{(k+\cos\theta)^2 +\sin^2\theta}}\\[6pt]
\sin \frac{\ep}{2} &= \pm\sqrt { \frac{1-\cos\ep}{2} }
= -\frac{\sin\theta}{\sqrt{(k+\cos\theta)^2 +\sin^2\theta}}
 \end{split}
\end{equation}
Following Fig.\,\ref{fig115}, the ambiguous signs in
\eqref{angle03} are selected such that $\cos\frac{\ep}{2}>0$ while
$\sin\frac{\ep}{2}$ and $\sin\theta$ have opposite signs.

\section{Thomas Precession Angle}

Thomas precession $\gyruvb$ in \eqref{gyromega}
can be recast into a form familiar
as the representation of a rotation about an
axis by an angle $\ep$,
\begin{equation} \label{eq1d73}
\gyruvb =
\left\{
\begin{array}{ll}
I+\sin\ep\frac{\Omega(\ub,\vb)}{\omega_{\theta}}
+(1-\cos\ep)\frac{\Omega^2(\ub,\vb)}{\omega_{\theta}^2},
& \omega_{\theta}\ne 0 \\
I, & \omega_{\theta}=0
\end{array}
\right.
\end{equation}
where $\ub,\vb\in\Rct$, and where $\ep$ is the Thomas precession angle shown
in Fig.\,\ref{fig115}.

Comparing \eqref{eq1d73} with \eqref{gyromega}, we see that
 \begin{equation} \label{eq1d74}
 \begin{split}
\sin\ep &= \alpha(\ub,\vb)\omega_{\theta} \\[3pt]
1-\cos\ep &= \beta(\ub,\vb)\omega_{\theta} \\
 \end{split}
 \end{equation}
and
\begin{equation} \label{eq1d75}
 \begin{split}
\omega_{\theta}&=\pm\|\ub\times\vb\| \\[3pt]
&= \|\ub\| \|\vb\| \sin\theta \\[3pt]
&= c^2 \frac{\ggub\ggvb}{\gub\gvb} \sin\theta
 \end{split}
\end{equation}
where the ambiguous sign is selected such that $\omega_{\theta}$ and
$\sin\theta$ have equal signs.

It follows from \eqref{eq1d74}\,--\,\eqref{eq1d75}, and from the definition
of $\alpha(\ub,\vb)$ and $\beta(\ub,\vb)$ in \eqref{albet} that
 \begin{equation} \label{eq1d76}
 \begin{split}
\cos\ep &=1- \frac{(\gub-1)(\gvb-1)}{\gupvb+1}\sin^2\theta \\[8pt]
\sin\ep &=
-\frac{\ggub\ggvb+(\gub-1)(\gvb-1)\cos\theta}{\gupvb+1}
\sin\theta
 \end{split}
 \end{equation}

Following \eqref{grbsf09}\,--\,\eqref{rugh2ds} and \eqref{hufd2} we have
\begin{subequations} \label{eq1d77}
\begin{equation} \label{eq1d77a}
\gupvb=\gub\gvb+\ggub\ggvb \,\, \cos\theta
\end{equation}
and
\begin{equation} \label{eq1d77b}
\gmupvb=\gub\gvb-\ggub\ggvb \,\, \cos\theta
\end{equation}
\end{subequations}
so that, by \eqref{eq1d76}\,--\,\eqref{eq1d77a}
\begin{equation} \label{eq1d76s}
 \begin{split}
\cos\ep &=1- \frac{(\gub-1)(\gvb-1)}
{1+\gub\gvb+\ggub\ggvb \cos\theta}
\sin^2\theta \\[8pt]
\sin\ep &=
-\frac{(\gub-1)(\gvb-1)(k+\cos\theta)}
{1+\gub\gvb+\ggub\ggvb \cos\theta}
\sin\theta
 \end{split}
 \end{equation}
where $k>1$ is given by \eqref{angle02}.

The special case when $\ub$ and $\vb$ have equal magnitudes is required for
later reference related to Fig.\,\ref{fig115}.
In this special case $\gub=\gvb$, so that $\ep$ in \eqref{eq1d76s}
reduces to $\ep_s$ given by
\begin{equation} \label{eq1d76st}
\begin{split}
\cos\ep_s &=1- \frac{(\gvb-1)^2 \sin^2\theta}
{1+\gvbs+(\gvbs-1)\cos\theta}
\\[8pt]
\sin\ep_s &=
-\frac{(\gvbs-1)+(\gvb-1)^2\cos\theta}
{1+\gvbs+(\gvbs-1)\cos\theta}
\sin\theta
 \end{split}
 \end{equation}

Solving \eqref{eq1d77} for $\cos\theta$ we obtain the equations
 \begin{equation} \label{eq1d78}
 \begin{split}
\cos\theta&=\frac{\gupvb-\gub\gvb}{\ggub\ggvb}
= \frac{-\gamma_{\om\ub\op\vb}^{\phantom{1}} + \gub\gvb}
{\ggub\ggvb}
\\[6pt]
\sin^2 \theta &=1-\cos^2\theta \\[6pt]
&=\frac{1-\gubs-\gvbs-\gupvbs+2\gub\gvb\gupvb}{(\gubs-1)(\gvbs-1)}
 \end{split}
 \end{equation}
The substitution of \eqref{eq1d78} into \eqref{eq1d76} gives
 \begin{equation} \label{eq1d79}
 \begin{split}
\cos\ep&= \frac{1} {(\gub+1)(\gvb+1)(\gupvb+1)} \\[3pt]
&\times \{-\gub\gvb\gupvb+\gubs+\gvbs+\gupvbs \\[3pt]
&+\gub\gvb+\gub\gupvb+\gvb\gupvb+\gub+\gvb+\gupvb\}
 \end{split}
 \end{equation}
so that, finally, we obtain the elegant expression
\begin{equation} \label{eq1g80}
1+\cos\ep = \frac{(1+\gub+\gvb+\gupvb)^2}{(1+\gub)(1+\gvb)(1+\gupvb)} > 0
\end{equation}
which agrees with McFarlane's result,
cited in \cite[Eq.~(2.10.7]{urbantkebookeng}.
It implies that $\ep\ne\pi$ for all $\ub,\vb\in\Rct$; and that
\begin{equation} \label{eq1d80}
\cos\frac{\ep}{2}=\sqrt{\frac{1+\cos\ep}{2}}
=\frac{1+\gub+\gvb+\gupvb}{\sqrt{2}\sqrt{1+\gub}\sqrt{1+\gvb}\sqrt{1+\gupvb}}
\end{equation}

Finally, we also have the elegant identity
\begin{equation} \label{dugit4}
\tan^2 \frac{\epsilon}{2} = \left(
\frac{\sin\epsilon}{1+\cos\epsilon} \right)^2 =
\frac{
1+2\gub\gvb\gupvb -\gubs-\gvbs-\gupvbs
}{
(1+\gub+\gvb+\gupvb)^2
}
\end{equation}

Hence, \eqref{sukeb},
the Thomas precession angle $\ep = \angle\gyruvb$ in Fig.\,\ref{fig115}
is given by the equation
\begin{equation} \label{dugit5}
\tan^2  \frac{\angle\gyruvb}{2} =
\frac{
1+2\gub\gvb\gupvb -\gubs-\gvbs-\gupvbs
}{
(1+\gub+\gvb+\gupvb)^2
}
\end{equation}
Noting the gyration even property in \eqref{laws00} and replacing
$\ub$ by $\om\ub$ in \eqref{dugit5}, we obtain the equations
\begin{equation} \label{dugit6}
\begin{split}
\tan^2  \frac{\angle\gyr[\ub,\om\vb]}{2} &=
\tan^2  \frac{\angle\gyr[\om\ub,\vb]}{2}
\\[8pt] &=
\frac{
1+2\gub\gvb\gmupvb -\gubs-\gvbs-\gmupvbs
}{
(1+\gub+\gvb+\gmupvb)^2
}
\\[8pt] &= \tan^2\tfrac{\delta}{2}
\end{split}
\end{equation}
The extreme right-hand side of \eqref{dugit6} follows from \eqref{hdke3}\,--\,\eqref{hdke4}.

Interestingly, it follows from \eqref{dugit6}
that the Thomas precession angle
generated by $\ub$ and $\om\vb$, that is,
$\angle\gyr[\ub,\om\vb]$,
possesses
an important hyperbolic geometric property. It equals the defect $\delta$
of the gyrotriangle generated by $\ub$ and $\vb$ in $\Rct$;
see also \cite[pp.~236--237]{mybook01} and the Gyration -- Defect Theorem
in \cite[Theorem 8.55, p.~317]{mybook03}.

The gyration $\gyr[\ub,\om\vb]$ possesses an important
gyroalgebraic property as well.
It gives rise to a second binary operation $\sqp$, called {\it Einstein coaddition},
given by the equation
\begin{equation} \label{hvudk}
\ub\sqp\vb = \ub\op\gyr[\ub,\om\vb]\vb
\end{equation}
which can be dualized into the equation \cite[Theorem 2.14]{mybook03}
\begin{equation} \label{eqdual13}
\ub\op\vb = \ub\sqp \gyruvb\vb
\end{equation}

Unlike Einstein addition, which is gyrocommutative,
Einstein coaddition is commutative.
Furthermore, it possesses a geometric interpretation as a
{\it gyroparallelogram addition law}, and it gives rise to the two
mutually dual right cancellation laws \cite{mybook03}
\begin{equation} \label{rightcanc}
\begin{split}
(\vb\op \ub) \sqm \ub &= \vb \\
(\vb\sqp \ub) \om \ub &= \vb
\end{split}
\end{equation}

Einstein coaddition $\sqp$ captures useful analogies with classical results,
two of which are the right cancellation laws in \eqref{rightcanc}.
Another important analogy that Einstein coaddition captures is associated
with the gyromidpoint. Indeed,  the midpoint $M_{_{A,B}}$ shown in Fig.~\ref{fig163aceuc3m}
is expressed in terms of the points $A$ and $B$ by the equation
\begin{equation} \label{guliver1}
M_{_{A,B}} = A+(-A+B)\half = \half(A+B)
\end{equation}
while, in full analogy, the gyromidpoint $M_{_{A,B}}$ shown in Fig.~\ref{fig163acein3m}
is expressed in terms of the points $A$ and $B$ by the equation \cite{mybook03}
\begin{equation} \label{guliver2}
M_{_{A,B}} = A\op(\om A\op B)\od\half = \half\od(A\sqp B)
\end{equation}

The right part of Fig.~\ref{fig115} raises the question as to whether the
composite velocity of frame $\Sigma^{\prime\prime}$ relative to frame $\Sigma$
is $\ub\op\vb$ or $\vb\op\ub$.
The answer is that the
composite velocity of frame $\Sigma^{\prime\prime}$ relative to frame $\Sigma$
is neither $\ub\op\vb$ nor $\vb\op\ub$.
Rather, it is given by the commutative composite velocity $\ub\sqp\vb$.
Indeed, it is demonstrated in \cite[Chap.~10--Epilogue]{mybook05},
and in more details in \cite[Chap.~13]{mybook03}, that
looking at the relativistic velocity addition law and its underlying
hyperbolic geometry through the lens of the cosmological stellar aberration effect
leads to a startling conclusion: relativistic velocities are gyrovectors
that add in the cosmos
according to the gyroparallelogram addition law of hyperbolic geometry,
that is, according to the commutative addition $\ub\sqp\vb$,
rather than either Einstein addition $\ub\op\vb$ or $\vb\op\ub$.

\section{Thomas Precession Frequency}

Let us consider a spinning spherical object moving with velocity $\vb$
of uniform magnitude $v=\|\vb\|$
along a circular path in some inertial frame $\Sigma$. We assume that the spin axis
lies in the plane containing the circular orbit, as shown in Fig.~\ref{fig294m}.
The spinning object acts
like a gyroscope, maintaining the direction of its spin axis
in the transition from one inertial frame into another one, as seen
by inertial observers moving instantaneously with the
accelerated object. Following
Taylor and Wheeler, we approximate the
circular path by a regular polygon of $n$ sides \cite{taylorwheeler}, as shown
in Fig.\,\ref{fig294m} for $n=8$.
In moving once
around this orbit the object moves with uniform
velocity $\vb$
in straight-line paths interrupted by $n$ sudden changes of direction, each
through an angle $\theta_n=2\pi/n$.

 
\begin{figure}[t]  
 \centering         
 \psfrag{A}{$A,~\Sigma$}
 \psfrag{B}{$B$}
 \psfrag{C}{$C$}
 \psfrag{e}{$\theta_n$}
 \psfrag{S1}{$\Sigma^\prime$}
 \psfrag{S2}{$\Sigma^{\prime\prime}$}
 \psfrag{Initial}{$\rm{Initial~Spin}$}
 \psfrag{Final}{$\rm{Final~Spin}$}
 \includegraphics[width=9cm]{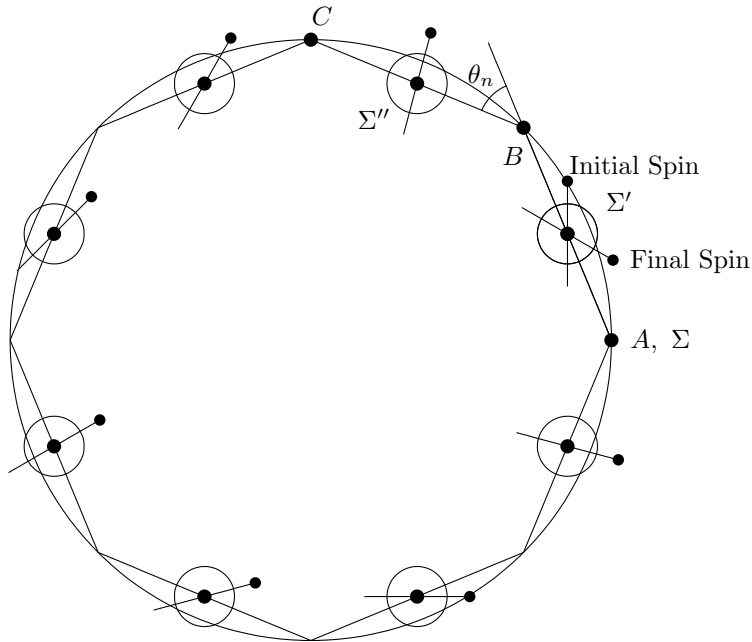}
\caption[Mobius Gyroparallelogram Addition Law]{
A regular polygonal path in $\Rt$ as an approximation to the Newtonian
circular path of a spinning spherical object.
The change of direction at each vertex of the polygon is $\theta_n=2\pi/n$,
where $n$ is the number of the polygon sides. Here, $n=8$.
In the limit $n\rightarrow\infty$, the polygonal path tends to the circular path.
A spinning spherical object is moving with velocity of uniform magnitude along
the polygonal path. The points $A,B,C\in\Rt$ are three adjacent vertices of the polygon
in the rest (laboratory) frame $\Sigma$. When the object moves from $A$ to $B$
it is at rest relative to the frame $\Sigma^\prime$, and
when the object moves from $B$ to $C$
it is at rest relative to the frame $\Sigma^{\prime\prime}$.
The relationship between the three inertial frames
$\Sigma$, $\Sigma^\prime$ and $\Sigma^{\prime\prime}$ is thus the one shown in
Fig.~\ref{fig115} with $\theta=\theta_n$.
Accordingly, since the object moves in the counterclockwise direction,
it precesses in the clockwise direction.
\newline
Initially, the spin of the object is vertical when the object moves uniformly from
$A$ to $B$. After completing its first closed orbit in the counterclockwise direction,
the object returns to is original position,
now moving from $A$ to $B$ with a spin that is precessed in the
clockwise direction.
The initial spin and the final spin for the first closed orbit starting at $A$
are shown.
\label{fig294m}}
\end{figure}

An observer at rest relative to the laboratory frame
$\Sigma$ views the motion of the object along the
polygonal path as the result of successive
boosts (A boost being a Lorentz transformation without rotation \cite{quasi91});
see Sec.~\ref{davidi}.
He therefore measures a Thomas precession angle $\ep_n$ by which the object
spin axis is precessed when the object rounds a corner. By \eqref{eq1d76st}, this
Thomas precession angle $\ep_n$ is determined by the equations
\begin{equation} \label{eq1d76m}
\begin{split}
\cos\ep_n &=1- \frac{(\gvb-1)^2 \sin^2\frac{2\pi}{n}}
{(\gvbs+1)+(\gvbs-1)\cos\frac{2\pi}{n}}
\\[6pt]
\sin\ep_n &=
-\frac{(\gvbs-1)+(\gvb-1)^2\cos\frac{2\pi}{n}}
{(\gvbs+1)+(\gvbs-1)\cos\frac{2\pi}{n}}
\sin\frac{2\pi}{n}
 \end{split}
 \end{equation}

By Euler's equation we have
\begin{equation} \label{kurni94}
\begin{split}
e^{i\ep_n} &= \cos\ep_n + i \sin\ep_n
\\[6pt] &= 1 + f(\frac{2\pi}{n})
\end{split}
\end{equation}
where $i=\sqrt{-1}$ and
\begin{equation} \label{kurni95}
f(\phi) = - \frac{
(\gvb-1)^2\sin \phi + i\{(\gvbs-1)+(\gvb-1)^2\cos \phi \}
}{
2+(\gvbs-1)(1+\cos \phi)
}
\sin \phi
\end{equation}
$\phi\in\Rb$.

As the spinning object moves around its polygonal orbit, its spin axis, as
observed in $\Sigma$, precesses by the Thomas precession angle $\ep_n$ when it rounds each
of the $n$ corners of the polygon as shown in Fig.~\ref{fig294m}.
The total angle of
precession is thus $n\ep_n$, represented by the unimodular complex
number
\begin{equation} \label{kurni96}
e^{in\ep_n} = \left\{ 1 + f(\frac{2\pi}{n}) \right\}^n
\end{equation}

In the limit $n\rightarrow\infty$ the polygonal path becomes a circular path,
and the frame of reference in which the center of momentum of the spinning object
is momentarily at rest is being changed continually. The
total Thomas precession is thus
the angle $\ep_t$ given by the equation
\begin{equation} \label{kurni97}
e^{i\ep_t} = \lim_{n\rightarrow \infty} e^{in\ep_n}
= \lim_{n\rightarrow \infty}
\left\{ 1 + f(\frac{2\pi}{n}) \right\}^n
\end{equation}

Let $g(x)$, $x\in\Rb$, be the function
\begin{equation} \label{kurni98}
g(x) = \lim_{n\rightarrow \infty} \left\{ 1 + f(\frac{x}{n}) \right\}^n
\end{equation}

The function $f(\phi)$ is continuous on $\Rb$, satisfying $f(0)=0$.
Hence,
\begin{equation} \label{kurni98s}
\lim_{n\rightarrow \infty} f(\frac{x}{n}) = 0
\end{equation}
for any $x\in\Rb$.

Interchanging the limit in \eqref{kurni98} with a differentiation with respect to $x$
we find that the function $g(x)$ satisfies the initial value problem
\begin{equation} \label{kurni99}
\begin{split}
g^\prime(x) &= f^\prime(0) g(x) \\
g(0) &= 1
\end{split}
\end{equation}
for $x\in\Rb$.

The unique solution of the initial value problem \eqref{kurni99} is
\begin{equation} \label{kurni100}
g(x) = e^{f^\prime(0)x}
\end{equation}
Hence, in particular for $x=2\pi$, it follows from
\eqref{kurni97}, \eqref{kurni98} and \eqref{kurni100} that
\begin{equation} \label{kurni101}
e^{i\ep_t} = g(2\pi) = e^{2\pi f^\prime(0)}
\end{equation}

But,
\begin{equation} \label{kurni101s}
f^\prime(0) = -i \frac{\gvb-1}{\gvb}
\end{equation}
Hence, by \eqref{kurni101}\,--\,\eqref{kurni101s},
the Thomas precession angle $\ep_t$ is given by the equation
\begin{equation} \label{kurni102}
\ep_t = -2\pi \frac{\gvb-1}{\gvb}
\end{equation}

The Thomas precession angle $\ep_t$ is the angle through which the spin axis
precesses in one complete circular orbit. It requires, therefore,
$2\pi / \ep_t$ orbits for the object to precess to its original
orientation through $2\pi$ radians. Hence, if the
angular velocity of the circular motion of the object
is $\omega$, then the angular velocity $\omega_t$ of the
Thomas precession angle of the object is given by the equation
\begin{equation} \label{kurni103}
\omega_t = \frac{\ep_t}{2\pi} \omega = - \frac{\gvb-1}{\gvb} \omega
\end{equation}

The quantity $\omega_t$ in \eqref{kurni103} is the angular velocity of the
Thomas precession angle $\ep_t$ of a particle that moves in a circular orbit with
angular velocity $\omega$.
Eq.~\eqref{kurni103} relates the angular velocity $\omega_t$ of the Thomas
precession angle $\ep_t$ to its generating angular velocity $\omega$.
It demonstrates that the angular velocities $\omega_t$ and $\omega$
are oppositely directed, as shown graphically in Fig.~\ref{fig115}.

If the magnitude of the velocity $\vb$ and the acceleration $\ab$ of the spinning
object are $v$ and $a$ then its angular velocity is given by the equation
$\omega = a/v$. Hence,
the angular velocity $\omega_t$ of the Thomas precession angle $\ep_t$ is
given by the equation
\begin{equation} \label{kurni107}
\omega_t = - \frac{\gvb-1}{\gvb} \frac{a}{v}
\end{equation}

Taking into account the direction of the Thomas precession axis and the
velocity and the acceleration of the spinning object,
and noting \eqref{rugh1ds}, the Thomas precession angular velocity $\omega_t$
in \eqref{kurni107} can be written as a vector equation,
\begin{equation} \label{kurni108}
\omegab_t = \frac{\gvb-1}{\gvb} \frac{\ab\times\vb}{v^2}
=  \frac{\gvb}{1+\gvb} \frac{\ab\times\vb}{c^2}
\end{equation}

The coordinate axes in the rest frame of any body in torque-free,
accelerated motion precesses with respect to the laboratory axes with
an angular velocity $\omegab_t$ is given by \eqref{kurni108}.
Since
$\gvb/(1+\gvb)=1/2+(1/8)(v^2/c^2)+\ldots$,
the angular velocity $\omegab_t$ of the resulting Thomas precession,
for the case when $v=\|\vb\| <\!\!< c$,
is given approximately by the equation
\begin{equation} \label{kurni109}
\omegab_t = \frac{1}{2} \frac{\ab\times\vb}{c^2}
\end{equation}

According to Herbert Goldstein \cite[p.~288]{goldstein}, $\omegab_t$ in \eqref{kurni109}
is known as the {\it Thomas precession frequency}.

The Thomas precession frequency \eqref{kurni109} involves the famous factor $1/2$,
known as {\it Thomas half}.
The experimental significance of this factor is well known:
The spinning electron of the Goudsmit-Uhlenbeck model
gives twice the observed precession effect, which is reduced to the
observed one by means of the Thomas half \cite{thomas82}.

\section{Thomas Precession and Boost Composition}\label{davidi}

Einstein addition underlies the Lorentz transformation group of
special relativity.
A Lorentz transformation is a linear transformation of spacetime coordinates that
fixes the spacetime origin.
A Lorentz boost, $B(\vb)$, is a Lorentz transformation without rotation,
parametrized by a velocity parameter $\vb\in \Rct$.
The velocity parameter is given by its components, $\vb=(v_1,v_2,v_3)$,
with respect to a given Cartesian coordinate system of $\Rct$.
Being linear, the Lorentz boost has a matrix representation,
which turns out to be \cite{moller52},
\begin{equation}\label{lormatrix}
\begin{split}
&B(\vb) = \\[4pt]
&\begin{pmatrix}
\gvb & \cmt\gvb   v_1 & \cmt\gvb   v_2 & \cmt\gvb   v_3 \\[6pt]
\gvb   v_1 & 1+\cmt\frac{\gvbs}{\gvb+1}   v_1^2 &
\cmt\frac{\gvbs}{\gvb+1}   v_1   v_2            &
\cmt\frac{\gvbs}{\gvb+1}   v_1   v_3                    \\[6pt]
\gvb   v_2 &\cmt\frac{\gvbs}{\gvb+1}   v_1   v_2&
1+\cmt\frac{\gvbs}{\gvb+1}   v_2^2 &
\cmt\frac{\gvbs}{\gvb+1}   v_2   v_3                    \\[6pt]
\gvb   v_3 &\cmt\frac{\gvbs}{\gvb+1}   v_1   v_3&
\cmt\frac{\gvbs}{\gvb+1}   v_2   v_3            &
1+\cmt\frac{\gvbs}{\gvb+1}   v_3^2
\end{pmatrix}
\end{split}
\end{equation}

Employing the matrix representation \eqref{lormatrix} of the
Lorentz transformation boost,
the Lorentz boost application to spacetime coordinates
takes the form
\begin{equation} \label{eqaa02s3}
B(\vb) \begin{pmatrix} t \\ \xb \end{pmatrix}
=
B(\vb)
\begin{pmatrix} t \\[3pt] x_1 \\[3pt] x_2  \\[3pt] x_3  \end{pmatrix}
=:
\begin{pmatrix}
t^\prime \\[3pt] x_1^\prime \\[3pt] x_2^\prime  \\[3pt] x_3^\prime
\end{pmatrix}
=
\begin{pmatrix}
t^\prime \\[3pt] \xb^\prime
\end{pmatrix}
\end{equation}
where $\vb=(v_1,v_2,v_3)^t\in \Rct$,
$\xb=(x_1,x_2,x_3)^t\in \Rt$,
$\xb^\prime=(x_1^\prime,x_2^\prime,x_3^\prime)^t\in \Rt$,
and $t,t^\prime\in \Rb$, where exponent $t$ denotes transposition.

A 2-dimensional boost is obtained in the special case when $v_3=x_3=0$
in \eqref{lormatrix}\,--\,\eqref{eqaa02s3}.
For simplicity, the boosts of inertial frames in Fig.~\ref{fig115} are two dimensional
boosts, and time coordinates are not shown.
In this figure, the spacetime coordinate systems
$\Sigma$, $\Sigma^\prime$ and $\Sigma^{\prime\prime}$ (only two space coordinates
are shown) are related by boosts.
Specifically, in Fig.~\ref{fig115},
\begin{enumerate}
\item\label{kamoti1}
the application of the boost $B(\ub)$ to the
spacetime coordinate system $\Sigma$ gives the spacetime coordinate system $\Sigma^\prime$,
\item\label{kamoti2}
the application of the boost $B(\vb)$ to the
spacetime coordinate system $\Sigma^\prime$
gives the spacetime coordinate system $\Sigma^{\prime\prime}$,
\item\label{kamoti3}
and the application of the boost $B(\ub\op\vb)$, or $B(\vb\op\ub)$,
to the
spacetime coordinate system $\Sigma$, preceded, or followed respectively,
by a Thomas precession
(see \eqref{guliv02}\,--\,\eqref{guliv03} in Theorem \ref{mainthm1} below)
gives the spacetime coordinate system $\Sigma^{\prime\prime}$.
\end{enumerate}

The Lorentz boost \eqref{lormatrix}\,--\,\eqref{eqaa02s3}
can be written vectorially in the form
\begin{equation} \label{kyhd02}
B(\ub)\tonxb =
\begin{pmatrix}
\gub(t+\frac{1}{c^2}\ub\ccdot \xb) \\[6pt]
\gub\ub t + \xb + \frac{1}{c^2}\frac{\gubs}{1+\gub}
(\ub\ccdot \xb)\ub
\end{pmatrix}
\end{equation}

Rewriting \eqref{kyhd02} with $\xb=\vb t\in\Rt$, $\ub,\vb\in \Rct\subset\Rt$,
we have
\begin{equation} \label{kyhd03}
\begin{split}
B(\ub)  \begin{pmatrix} t \\ \xb   \end{pmatrix}
=
B(\ub)  \begin{pmatrix} t \\ \vb t \end{pmatrix}
&=
\begin{pmatrix}
\gub(t+\frac{1}{c^2}\ub\ccdot \vb t) \\[6pt]
\gub\ub t + \vb t + \frac{1}{c^2}\frac{\gubs}{1+\gub}
(\ub\ccdot \vb t)\ub
\end{pmatrix}
\\[8pt] &=
\begin{pmatrix} \frac{\gamma_{\ub\op \vb}^{\phantom{O}}}{\gvb} t\\[6pt]
\frac{\gamma_{\ub\op \vb}^{\phantom{O}}}{\gvb}
(\ub\op \vb)t
 \end{pmatrix}
=
\begin{pmatrix} t^\prime \\ \xb^\prime   \end{pmatrix}
\end{split}
\end{equation}

The equations in \eqref{kyhd03} reveal explicitly the way Einstein velocity addition underlies
the Lorentz boost.
The third equality in \eqref{kyhd03} follows from
\eqref{grbsf09p0} and \eqref{eq01}.

In general, the composition of two boosts is equivalent to a single boost
preceded, or followed, by the space rotation that Thomas precession generates,
as we see from the following theorem:

\begin{ttheorem}\label{mainthm1}
{\bf (The Boost Composition Theorem):}
Let $\ub,\vb,\wb\in\Rct$ be relativistically admissible velocities,
let $\xb=\wb t$, $t>0$, and
let $B(\ub)$ and $B(\vb)$ be two boosts. Furthermore,
let $\Gyr[\ub,\vb]$ be the spacetime gyration of space coordinates,
given by
\begin{equation} \label{guliv01}
\Gyr[\ub,\vb] \begin{pmatrix} t \\ \xb=\wb t   \end{pmatrix}
:= \begin{pmatrix} t \\ (\gyruvb\wb)t   \end{pmatrix}
\end{equation}

Then, boost composition is given by each of the two equations
\begin{equation} \label{guliv02}
B(\ub)B(\vb) = B(\ub\op\vb)\Gyr[\ub,\vb]
\end{equation}
\begin{equation} \label{guliv03}
B(\ub)B(\vb) = \Gyr[\vb,\ub]B(\vb\op\ub)
\end{equation}
\end{ttheorem}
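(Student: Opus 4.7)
The plan is to verify each of the matrix identities \eqref{guliv02} and \eqref{guliv03} by computing the action of both sides on a generic spacetime vector $\binom{t}{\wb t}$ with $t>0$ and $\wb\in\Rct$, and then concluding by linearity. All four ingredients $B(\ub)$, $B(\vb)$, $B(\ub\op\vb)$ and $\Gyr[\ub,\vb]$ are linear self-maps of $\Rb^4$ — the boosts by \eqref{lormatrix}, and the spacetime gyration by \eqref{guliv01} together with the linearity of $\gyruvb$ recorded in \eqref{hekdn}. Vectors $\binom{t}{\wb t}$ with $t>0$ and $\wb\in\Rct$ fill an open cone in $\Rb^4$, so two linear maps that agree on this cone must coincide on all of $\Rb^4$.

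For the left-hand side of \eqref{guliv02}, I apply formula \eqref{kyhd03} twice. The first application of $B(\vb)$ to $\binom{t}{\wb t}$ returns a spacetime vector of the form $\binom{t_1}{(\vb\op\wb)t_1}$ with $t_1=(\gamma_{\vb\op\wb}/\gwb)t$. Since this output is again of the parametric form $\binom{t_1}{\wb_1 t_1}$ with $\wb_1=\vb\op\wb$, a second application of \eqref{kyhd03} to $B(\ub)$, followed by a telescoping of the gamma ratios, delivers the spacetime vector whose time component is $(\gamma_{\ub\op(\vb\op\wb)}/\gwb)\,t$ and whose space component is that same scalar times $\ub\op(\vb\op\wb)$.

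For the right-hand side of \eqref{guliv02}, the spacetime gyration acts by \eqref{guliv01} to produce $\binom{t}{(\gyruvb\wb)t}$. A single application of \eqref{kyhd03}, now with boost velocity $\ub\op\vb$ and ``inner'' velocity $\gyruvb\wb$, outputs a spacetime vector whose space part is a scalar multiple of $(\ub\op\vb)\op\gyruvb\wb$ with scalar $\gamma_{(\ub\op\vb)\op\gyruvb\wb}/\gamma_{\gyruvb\wb}$. Two facts from Sec.~\ref{secc3} collapse this onto the left-hand side: the isometry property \eqref{eq005a} gives $\gamma_{\gyruvb\wb}=\gwb$, and the left gyroassociative law from \eqref{laws00} gives $(\ub\op\vb)\op\gyruvb\wb=\ub\op(\vb\op\wb)$ (which in particular also equates the two remaining gamma factors). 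The heart of the argument is precisely this recognition that the nonassociativity of Einstein addition, corrected by $\gyruvb$, is exactly what is required to reconcile the iterated boost with a single boost; once \eqref{kyhd03}, the left gyroassociative law, and the isometry property are in view, no heavy computation remains, and the main ``obstacle'' is really conceptual — spotting that the gyroassociative shuffle is the needed identity rather than getting lost in the $4\times4$ matrix product in \eqref{lormatrix}.

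To obtain \eqref{guliv03} I would avoid a second direct computation and instead take matrix inverses of \eqref{guliv02}. Using $B(\vb)^{-1}=B(\om\vb)$ together with the gyration inversion law $\Gyr[\ub,\vb]^{-1}=\Gyr[\vb,\ub]$, the inverted identity reads $B(\om\vb)B(\om\ub)=\Gyr[\vb,\ub]B(\om(\ub\op\vb))$. The automorphic inverse property \eqref{eq01a} rewrites $\om(\ub\op\vb)=\om\ub\op\om\vb$, and then relabelling $\om\vb\mapsto\ub$, $\om\ub\mapsto\vb$, combined with the gyration even property $\gyr[\om\ub,\om\vb]=\gyr[\ub,\vb]$ from \eqref{laws00}, converts the identity into \eqref{guliv03}.
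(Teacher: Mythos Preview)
Your treatment of \eqref{guliv02} is essentially the paper's: both compute the action of each side on $\binom{t}{\wb t}$ via \eqref{kyhd03} and match the results using the left gyroassociative law together with $\gamma_{\gyruvb\wb}=\gwb$. Your explicit remark that agreement on the open cone $\{(t,\wb t):t>0,\ \wb\in\Rct\}$ forces equality of the linear maps on all of $\Rb^4$ is a point the paper leaves implicit.

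For \eqref{guliv03} you take a genuinely different route. The paper runs a second explicit chain \eqref{guliv07}, applying the right-hand side to $\binom{t}{\wb t}$ and invoking the gyrocommutative law to land on the extreme right of \eqref{guliv06}. Your inversion-and-relabelling trick is slicker, but you must carry the relabelling through carefully: after inverting \eqref{guliv02} you have $B(\om\vb)B(\om\ub)=\Gyr[\vb,\ub]\,B(\om\ub\om\vb)$; substituting new $\ub:=\om\vb$, new $\vb:=\om\ub$ turns the left side into $B(\ub)B(\vb)$ and the boost on the right into $B(\vb\op\ub)$, but the gyration factor $\Gyr[\vb,\ub]$ (old variables) becomes $\Gyr[\om\ub,\om\vb]=\Gyr[\ub,\vb]$ in the new variables by the even property --- not $\Gyr[\vb,\ub]$. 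So your method actually yields
\[
B(\ub)B(\vb)=\Gyr[\ub,\vb]\,B(\vb\op\ub),
\]
which one also gets directly from \eqref{guliv02} and the conjugation $R\,B(\wb)\,R^{-1}=B(R\wb)$ together with $\gyr[\vb,\ub](\ub\op\vb)=\vb\op\ub$. This is in fact the correct identity: in the paper's own chain \eqref{guliv07}, step~(5) uses $\gyruvb\{(\vb\op\ub)\op\wb\}=(\ub\op\vb)\op\gyruvb\wb$ (gyrocommutativity plus the automorphism property), whereas the $\gyrvub$ appearing in steps~(3)--(4) and the $\Gyr[\vb,\ub]$ in the statement of \eqref{guliv03} are misprints for $\gyruvb$ and $\Gyr[\ub,\vb]$. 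Your derivation is sound; just note that it proves the corrected form of \eqref{guliv03}, not the one printed.
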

\begin{proof}
We will show that \eqref{guliv02} follows from the gyroassociative law
of Einstein addition and that that \eqref{guliv03} follows from \eqref{guliv02}
and the gyrocommutative law of Einstein addition.

On the one hand we have
the chain of equations below, which are numbered for subsequent derivation:
\begin{equation} \label{guliv04}
\begin{split}
B(\ub)B(\vb) \begin{pmatrix} t \\ \xb   \end{pmatrix}
&
\overbrace{=\!\!=\!\!=}^{(1)} \hspace{0.2cm}
B(\ub)B(\vb) \begin{pmatrix} t \\ \wb t   \end{pmatrix}
\\[8pt]&
\overbrace{=\!\!=\!\!=}^{(2)} \hspace{0.2cm}
B(\ub) \begin{pmatrix}
\frac{\gamma_{\vb\op\wb}^{\phantom{1}}}{\gamma_{\wb}^{\phantom{1}}}
t
\\
\frac{\gamma_{\vb\op\wb}^{\phantom{1}}}{\gamma_{\wb}^{\phantom{1}}}
(\vb\op\wb)t
\end{pmatrix}
\\[8pt]&
\overbrace{=\!\!=\!\!=}^{(3)} \hspace{0.2cm}
B(\ub) \begin{pmatrix} t^\prime \\ (\vb\op\wb)  t^\prime \end{pmatrix}
\\[8pt]&
\overbrace{=\!\!=\!\!=}^{(4)} \hspace{0.2cm}
\begin{pmatrix}
\frac{\gamma_{\ub\op(\vb\op\wb)}^{\phantom{1}}}{\gamma_{\vb\op\wb}^{\phantom{1}}}
t^\prime \\
\frac{\gamma_{\ub\op(\vb\op\wb)}^{\phantom{1}}}{\gamma_{\vb\op\wb}^{\phantom{1}}}
\{ \ub\op(\vb\op\wb) \} t^\prime
\end{pmatrix}
\\[8pt]&
\overbrace{=\!\!=\!\!=}^{(5)} \hspace{0.2cm}
\begin{pmatrix}
\frac{\gamma_{\ub\op(\vb\op\wb)}^{\phantom{1}}}{\gamma_{\wb}^{\phantom{1}}}
t \\
\frac{\gamma_{\ub\op(\vb\op\wb)}^{\phantom{1}}}{\gamma_{\wb}^{\phantom{1}}}
\{ \ub\op(\vb\op\wb) \} t
\end{pmatrix}
\\[8pt]&
\overbrace{=\!\!=\!\!=}^{(6)} \hspace{0.2cm}
\begin{pmatrix}
\frac{\gamma_{(\ub\op\vb)\op\gyruvb\wb}^{\phantom{1}}}{\gamma_{\wb}^{\phantom{1}}}
t \\
\frac{\gamma_{(\ub\op\vb)\op\gyruvb\wb}^{\phantom{1}}}{\gamma_{\wb}^{\phantom{1}}}
\{(\ub\op\vb)\op\gyruvb\wb \} t
\end{pmatrix}
\end{split}
\end{equation}
Derivation of the numbered equalities in \eqref{guliv04} follows:
\begin{enumerate}
\item\label{kamotu1}
Follows from the definition $\xb=\wb t$.
\item\label{kamotu2}
Follows from \ref{kamotu1} by a boost application to spacetime coordinates
according to \eqref{kyhd03}.
\item\label{kamotu3}
Follows from \ref{kamotu2} by the obvious definition
\begin{equation} \label{guliv05}
t^\prime = \frac{\gamma_{\vb\op\wb}^{\phantom{1}}}{\gamma_{\wb}^{\phantom{1}}} t
\end{equation}
\item\label{kamotu4}
Follows from \ref{kamotu3} by a boost application to spacetime coordinates
according to \eqref{kyhd03}.
\item\label{kamotu5}
Follows from \ref{kamotu4} by the substitution of \eqref{guliv05} for
$t^\prime$.
\item\label{kamotu6}
Follows from \ref{kamotu5} by the gyroassociative law of Einstein addition.
\end{enumerate}

On the other hand we have
the chain of equations below, which are numbered for subsequent derivation:
\begin{equation} \label{guliv06}
\begin{split}
B(\ub\op\vb)\Gyr[\ub,\vb]  \begin{pmatrix} t \\ \xb   \end{pmatrix}
&
\overbrace{=\!\!=\!\!=}^{(1)} \hspace{0.2cm}
B(\ub\op\vb) \begin{pmatrix} t \\ \gyruvb\xb   \end{pmatrix}
\\[8pt]&
\overbrace{=\!\!=\!\!=}^{(2)} \hspace{0.2cm}
B(\ub\op\vb) \begin{pmatrix} t \\ (\gyruvb\wb)t   \end{pmatrix}
\\[8pt]&
\overbrace{=\!\!=\!\!=}^{(3)} \hspace{0.2cm}
\begin{pmatrix}
\frac{
\gamma_{(\ub\op\vb)\op\gyruvb\wb}^{\phantom{1}}
}{
\gamma_{\gyruvb\wb}^{\phantom{1}}
} t
\\
\frac{
\gamma_{(\ub\op\vb)\op\gyruvb\wb}^{\phantom{1}}
}{
\gamma_{\gyruvb\wb}^{\phantom{1}}
}
\{ (\ub\op\vb) \op \gyruvb\wb\} t
\end{pmatrix}
\\[8pt]&
\overbrace{=\!\!=\!\!=}^{(4)} \hspace{0.2cm}
\begin{pmatrix}
\frac{
\gamma_{(\ub\op\vb)\op\gyruvb\wb}^{\phantom{1}}
}{
\gamma_{\wb}^{\phantom{1}}
} t
\\
\frac{
\gamma_{(\ub\op\vb)\op\gyruvb\wb}^{\phantom{1}}
}{
\gamma_{\wb}^{\phantom{1}}
} 
\{ (\ub\op\vb) \op \gyruvb\wb\} t
\end{pmatrix}
\end{split}
\end{equation}
Derivation of the numbered equalities in \eqref{guliv06} follows:
\begin{enumerate}
\item\label{kamota1}
Follows from the definition of the spacetime gyration $\Gyr[\ub,\vb]$ in terms
of the space gyration $\gyruvb$ in \eqref{guliv01}.
\item\label{kamota2}
Follows from \ref{kamota1} by definition, $\xb=\wb t$.
\item\label{kamota3}
Follows from \ref{kamota2} by a boost application to spacetime coordinates
according to \eqref{kyhd03}.
\item\label{kamota4}
Follows from \ref{kamota3} by the identity
$\gwb=\gamma_{\gyruvb\wb}^{\phantom{1}}$ that, in turn, follows from
the definition of gamma factors in \eqref{v72gs}
along with the invariance \eqref{eq005a}
of relativistically admissible velocities under gyrations.
\end{enumerate}

The extreme right-hand sides of the chain of equations
\eqref{guliv04} and \eqref{guliv06} are identically equal.
Hence, the extreme left-hand sides of \eqref{guliv04} and \eqref{guliv06} are
equal for all
spacetime events $(t,\xb)^t$, $t\in\Rb$, $\xb=\wb t$, $\wb\in\Rct$,
thus verifying \eqref{guliv02}.

In order to verify \eqref{guliv03}, let us now consider
the chain of equations below, which are numbered for subsequent derivation:
\begin{equation} \label{guliv07}
\begin{split}
\Gyr[\vb,\ub] B(\vb\op\ub) \begin{pmatrix} t \\ \xb   \end{pmatrix}
&
\overbrace{=\!\!=\!\!=}^{(1)} \hspace{0.2cm}
\Gyr[\vb,\ub] B(\vb\op\ub) \begin{pmatrix} t \\ \wb t   \end{pmatrix}
\\[8pt]&
\overbrace{=\!\!=\!\!=}^{(2)} \hspace{0.2cm}
\Gyr[\vb,\ub]
\begin{pmatrix}
\frac{\gamma_{(\vb\op\ub)\op\wb}^{\phantom{1}}}{\gwb} t
\\
\frac{\gamma_{(\vb\op\ub)\op\wb}^{\phantom{1}}}{\gwb} \{(\vb\op\ub)\op\wb \}t
\end{pmatrix}
\\[8pt]&
\overbrace{=\!\!=\!\!=}^{(3)} \hspace{0.2cm}
\begin{pmatrix}
\frac{\gamma_{(\vb\op\ub)\op\wb}^{\phantom{1}}}{\gwb} t
\\
\frac{\gamma_{(\vb\op\ub)\op\wb}^{\phantom{1}}}{\gwb} \gyrvub\{(\vb\op\ub)\op\wb \}t
\end{pmatrix}
\\[8pt]&
\overbrace{=\!\!=\!\!=}^{(4)} \hspace{0.2cm}
\begin{pmatrix}
\frac{\gamma_{\gyruvb\{(\vb\op\ub)\op\wb\}}^{\phantom{1}}}
{\gamma_{\gyruvb\wb}^{\phantom{1}}}
t
\\
\frac{\gamma_{\gyruvb\{(\vb\op\ub)\op\wb\}}^{\phantom{1}}}
{\gamma_{\gyruvb\wb}^{\phantom{1}}}
\gyrvub \{(\vb\op\ub)\op\wb\} t
\end{pmatrix}
\\[8pt]&
\overbrace{=\!\!=\!\!=}^{(5)} \hspace{0.2cm}
\begin{pmatrix}
\frac{
\gamma_{(\ub\op\vb)\op\gyruvb\wb}^{\phantom{1}}
}{
\gamma_{\gyruvb\wb}^{\phantom{1}}
}
t
\\
\frac{
\gamma_{(\ub\op\vb)\op\gyruvb\wb}^{\phantom{1}}
}{
\gamma_{\gyruvb\wb}^{\phantom{1}}
}
\{(\ub\op\vb)\op\gyruvb\wb \} t
\end{pmatrix}
\\[8pt]&
\overbrace{=\!\!=\!\!=}^{(6)} \hspace{0.2cm}
B(\ub\op\vb) \begin{pmatrix} t \\ \gyruvb\wb t \end{pmatrix}
\\[8pt]&
\overbrace{=\!\!=\!\!=}^{(7)} \hspace{0.2cm}
B(\ub\op\vb) \begin{pmatrix} t \\ \gyruvb\xb \end{pmatrix}
\\[8pt]&
\overbrace{=\!\!=\!\!=}^{(8)} \hspace{0.2cm}
B(\ub\op\vb) \Gyr[\ub,\vb] \begin{pmatrix} t \\ \xb \end{pmatrix}
\end{split}
\end{equation}
The chain of equations \eqref{guliv07} is valid for all
spacetime events $(t,\xb)^t$, $t\in\Rb$, $\xb=\wb t$, $\wb\in\Rct$,
thus verifying \eqref{guliv03}.

Derivation of the numbered equalities in \eqref{guliv07} follows:
\begin{enumerate}
\item\label{kamotb1}
Follows by definition, $\xb=\wb t$.
\item\label{kamotb2}
Follows from \ref{kamotb1} by a boost application to spacetime coordinates
according to \eqref{kyhd03}.
\item\label{kamotb3}
Follows from \ref{kamotb2} by
the definition of the spacetime gyration $\Gyr[\vb,\ub]$ in terms
of the space gyration $\gyrvub$ in \eqref{guliv01}.
\item\label{kamotb4}
Follows from \ref{kamotb3} by the identity
$\gwb=\gamma_{\gyruvb\wb}^{\phantom{1}}$, $\ub,\vb,\wb\in\Rct$,
that, in turn, follows from
the definition of gamma factors in \eqref{v72gs}
along with the invariance \eqref{eq005a}
of relativistically admissible velocities under gyrations.
\item\label{kamotb5}
Follows from \ref{kamotb4} by the linearity of gyrations along with
the gyrocommutative law of Einstein addition.
\item\label{kamotb6}
Follows from \ref{kamotb5} by a boost application to spacetime coordinates
according to \eqref{kyhd03}.
\item\label{kamotb7}
Follows from \ref{kamotb6} by definition, $\xb=\wb t$.
\item\label{kamotb8}
Follows from \ref{kamotb7} by
the definition of the spacetime gyration $\Gyr[\vb,\ub]$ in terms
of the space gyration $\gyrvub$ in \eqref{guliv01}.
\end{enumerate}
\end{proof}

The Boost Composition Theorem \ref{mainthm1} and its proof establish
the following two results:
\begin{enumerate}
\item\label{kamotd1}
The composite velocity of frame $\Sigma^{\prime\prime}$ relative to
frame $\Sigma$ in  Fig.~\ref{fig115} may, paradoxically, be
both $\ub\op\vb$ and $\vb\op\ub$.
Indeed, it is $\ub\op\vb$ in the sense that $\Sigma^{\prime\prime}$ is obtained from
$\Sigma$ by a boost of velocity $\ub\op\vb$ {\it preceded} by
the gyration $\Gyr[\ub,\vb]$ or, equivalently, it is obtained from
$\Sigma$ by a boost of velocity $\vb\op\ub$ {\it followed} by
the gyration $\Gyr[\vb,\ub]$.
\item\label{kamotd2}
The relationships \eqref{guliv02}\,--\,\eqref{guliv03} between boosts
and Thomas precession are equivalent to the
gyroassociative law and the gyrocommutative law
of Einstein velocity addition as we see from the proof of Theorem \ref{mainthm1}.
\end{enumerate}

In view of these two results of the Boost Composition Theorem,
the validity of the Thomas precession frequency, as shown graphically
in Fig.~\ref{fig115}, and the relationship between the Thomas precession angle $\ep$
and its generating angle $\theta$ stem from the
gyroassociative law of Einstein velocity addition.
Hence, in particular, the result that $\ep$ and $\theta$ have opposite signs
is embedded in the gyroassociative law of Einstein addition.
In the next section we will present a convincing numerical demonstration that
interested readers may perform to determine that, indeed,
$\ep$ and $\theta$ in Fig.~\ref{fig115} have opposite signs.

\section{Thomas Precession Angle and Generating Angle have Opposite Signs}

As in Fig.~\ref{fig115}, let $\ep$ and $\theta$ be
the Thomas Precession Angle and its generating angle, respectively.
As verified analytically, and as shown graphically in Fig.~\ref{fig115},
the angles $\ep$ and $\theta$ are related by \eqref{eq1d76s} and, hence,
they have opposite signs.

Without loss of generality, as in Fig.~\ref{fig115}, we limit our considerations
to two space dimensions.
Let $\ub,\vb\in\Rctwo$ be two nonzero relativistically admissible velocities
with angle $\theta$ between their directions, as shown in Fig.~\ref{fig115}.
Then, they are related by the equation
\begin{equation} \label{guliv08}
\frac{\vb}{\|\vb\|} =
\begin{pmatrix} \cos\theta&-\sin\theta\\\sin\theta&\phantom{-}\cos\theta \end{pmatrix}
\frac{\ub}{\|\ub\|}
\end{equation}

Let $\wb\in\Rctwo$ be the velocity of an object relative to frame
$\Sigma^{\prime\prime}$ in Fig.~\ref{fig115}.
Then, the velocity of the object relative to frame $\Sigma$ in Fig.~\ref{fig115}
is
\begin{equation} \label{guliv09}
\ub\op(\vb\op\wb) = (\ub\op\vb)\op\gyruvb\wb
\end{equation}
so that the velocity $\wb$ of the object is rotated relative to $\Sigma$
by the Thomas precession $\gyruvb$, which corresponds to the rotation angle $\ep$
given by \eqref{eq1d76s}. Hence,
\begin{equation} \label{guliv10}
\gyruvb\wb =
\begin{pmatrix} \cos\ep&-\sin\ep\\\sin\ep&\phantom{-}\cos\ep\end{pmatrix} \wb
\end{equation}
where $\ep$ is given by \eqref{eq1d76s}.

Substituting $\vb$ from \eqref{guliv08} into \eqref{guliv10}, we obtain the equation
\begin{equation} \label{guliv11}
\gyr[\ub, \frac{\|\vb\|}{\|\ub\|}
\begin{pmatrix} \cos\theta&-\sin\theta\\\sin\theta&\phantom{-}\cos\theta \end{pmatrix}
\ub]\wb
=
\begin{pmatrix} \cos\ep&-\sin\ep\\\sin\ep&\phantom{-}\cos\ep\end{pmatrix} \wb
\end{equation}

In \eqref{guliv11}, $\theta$ is the angle shown in the left part of Fig.~\ref{fig115},
which generates the Thomas precession angle $\ep$ shown in the
right part of Fig.~\ref{fig115},
where $\ep$ is determined by $\theta$ according to \eqref{eq1d76s} and, hence,
where $\theta$ and $\ep$ have opposite signs.
The validity of \eqref{guliv11} can readily be corroborated numerically.
The numerical corroboration of the validity of \eqref{guliv11}, in turn,
provides a simple way to convincingly confirm our claim that indeed
$\theta$ and $\ep$ have opposite signs.



\end{document}